\newtheorem{theorem}{Theorem}
\newtheorem{definition}{Definition}
\newtheorem{lemma}[theorem]{Lemma}
\newtheorem{corollary}{Corollary}
\newcommand{\diag}{\mathop{\mathrm{diag}}}
\begin{document}
	
\setstcolor{blue}
	
\title{Generalized Compression Strategy for the Downlink Cloud Radio Access Network}
\author{Pratik Patil, Wei Yu,~\IEEEmembership{Fellow,~IEEE}
\thanks{This work is supported by Natural Sciences and Engineering 
Research Council of Canada. The materials in this paper have been presented in
part at the IEEE Communication Theory Workshop (CTW), Nafplio, Greece, May 2016.
The authors are with The Edward S. Rogers Sr. Department of Electrical and
Computer Engineering, University of Toronto, 10 King's College Road, Toronto,
Ontario M5S 3G4, Canada (e-mails: pratik.patil@mail.utoronto.com,
weiyu@ece.utoronto.ca).

Copyright (c) 2017 IEEE. Personal use of this material is permitted.  However, permission to use this material for any other purposes must be obtained from the IEEE by sending a request to pubs-permissions@ieee.org.}
}

\maketitle

\begin{abstract}
This paper studies the downlink of a cloud radio access network (C-RAN) in which a centralized processor (CP) communicates with mobile users through base stations (BSs) that are connected to the CP via finite-capacity fronthaul
links. Information theoretically, the downlink of a C-RAN is modeled as a two-hop broadcast-relay network. Among the various transmission and relaying strategies for such model, this paper focuses on the compression strategy, in
which the CP centrally encodes the signals to be broadcast jointly by the BSs, then compresses and sends these signals to the BSs through the fronthaul links.  We characterize an achievable rate region for a generalized 
compression strategy with Marton's multicoding for broadcasting and multivariate compression for fronthaul transmission. We then compare this rate region with the distributed decode-forward (DDF) scheme, which achieves the
capacity of the general relay networks to within a constant gap, and show that the difference lies in that DDF performs Marton's multicoding and multivariate compression jointly as opposed to successively as in the compression strategy.
A main result of this paper is that under the assumption that the fronthaul links are subject to a \emph{sum} capacity constraint, this difference is immaterial; so,  for the Gaussian network, the compression strategy based on successive encoding can already achieve the capacity region of the C-RAN to within a constant gap, where the gap is independent of the channel parameters and the power constraints at the BSs. As a further result, for C-RAN under individual fronthaul constraints, this paper also establishes that the compression strategy can achieve to within a constant gap to the \emph{sum} capacity.
\end{abstract}

\begin{IEEEkeywords}
Cloud radio access network (C-RAN), compression, distributed decode-forward,
fronthaul, relay channel.
\end{IEEEkeywords}

\section{Introduction}

This paper studies the downlink of a cloud radio access network (C-RAN) in which the base stations (BSs) are connected to a centralized cloud-computing-enabled processor through wired or wireless fronthaul links \cite{simeone2016cloud}. Information theoretically, the downlink C-RAN can be modeled as a broadcast-relay channel: the CP broadcasts the user messages to the BSs via the fronthaul links and the BSs act as relays for the mobile users.
This paper considers the C-RAN model where the BSs are connected to the CP
through noiseless digital fronthaul links of finite capacities and there
are no direct links between the CP and the mobile users. In the
ideal case where the capacities of the fronthaul links are infinite, downlink C-RAN model reduces to a multi-antenna broadcast channel.  The optimal transmission strategy in this case is cooperative beamforming combined with dirty-paper coding (DPC) \cite{costa1983writing}. For the practical situation where the fronthaul links have finite capacities, the optimal coding strategy must combine both broadcasting and relaying, and is highly non-trivial; the characterization of the capacity region is still an open problem. This paper makes progress in establishing the achievable rate region of a generalized compression strategy and in showing that it is approximately optimal for the downlink C-RAN under certain conditions.

\subsection{Coding Strategies}

While the C-RAN architecture has been originally motivated by the radio-over-fiber concept \cite{simeone2016cloud}, the information theoretical study of the downlink C-RAN model belongs to that of relay channels, and more specifically relates to the so-called diamond relay channels for which there is an extensive literature, e.g., \cite{schein2001distributed}, \cite{traskov2007reliable}, \cite{kang_diamond}, \cite{chern14}, \cite{kang15}.
In the C-RAN context, there are two main classes of transmission and relaying strategies available in the literature: the data-sharing and the compression strategies.
In the data-sharing strategy, individual user messages are sent directly via
the digital fronthaul to the BSs, which then perform cooperative beamforming to
the users. The capacity constraints of the fronthaul links limit the number of
users whose messages can be sent to each BS, hence limiting the cooperation BS
cluster size for each user. Among the data-sharing schemes, joint encoding
at the BSs can be done using linear beamforming with the sharing of the entire
messages \cite{DaiYu_Access14} or with message splitting \cite{ZakhourGesbert11}.
Generalized versions of the data-sharing strategy using Marton's broadcast coding 
have been proposed for a 2-user 2-BS C-RAN in \cite{LiuKang2014}, and improved upon in
\cite{wang2018, wang2017} by using a common message, and further generalized in
\cite{YiLiu2015} for arbitrary number of users and BSs. Although the data-sharing strategy does not necessarily achieve the capacity in general,
there are some special cases for which it does. For example, 
the achievable rate based on Marton's coding proposed in \cite{BidokhtiKramerShamai2017} for a C-RAN with a single user (but any number of BSs) can be shown to achieve the capacity in some interesting regimes of operation. 
Upper bounds on the sum rate of some other specific cases of C-RAN model are studied in \cite{yang2017upper}. We also mention here that instead of sharing the individual user messages directly, the
CP may send a function of user messages to the BSs. For example, in the reverse
compute-forward strategy \cite{hong_caire_journal}, a function of the messages
is relayed to the BSs using lattice codes. 
As an alternative to the data-sharing strategy, the capacity limitation of the fronthaul links can also be
dealt with using a compression strategy \cite{ParkSimeoneSahinShamai2014}, 
in which the encoding is performed at the CP as a function of the messages of all
users, but in order to accommodate the capacity constraints of the fronthaul links,
the encoded analog signals are compressed and sent to the BSs.  The BSs then
transmit the encoded signals to the users after decompressing the received
compression bits. We note here that a hybrid scheme combining the data sharing
and compression strategies is also possible \cite{patil2018hybrid}.

This paper aims to understand the information theoretical optimality of the 
compression strategy for C-RAN. As pointed out earlier, if the fronthaul
capacity is infinite, the downlink C-RAN reduces to the well-known vector
Gaussian broadcast channel, for which DPC achieves the capacity region.
For the finite fronthaul case, DPC and linear precoding schemes cannot be
applied directly. A compressed version of DPC using independent compression
across the BSs is introduced in \cite{SimeoneSomekhPoorShamai09} and the achievable
user rates are derived for a simplified Wyner type model.
The independent compression scheme can be further improved by using a
multivariate compression strategy across all the BSs
\cite{ParkSimeoneSahinShamai13}.  
The idea is to correlate the quantization noises at the different BSs
to better control the effect of quantization
at the users. The achievable rate
expressions under linear beamforming and multivariate compression for the
Gaussian C-RAN model are given in \cite{ParkSimeoneSahinShamai13} and the
corresponding achievable rate region using dirty paper coding followed by
multivariate compression is given in \cite{simeone2016cloud}.

Can either the data-sharing or compression strategy approach the information
theoretic capacity region of the C-RAN model?  Toward answering this question,
this paper draws inspiration from a new coding strategy named
distributed decode-forward (DDF) \cite{Lim2017DistributedDecodeForward} for
broadcasting multiple messages over a general relay network, which has been
shown to achieve the capacity region of the general Gaussian broadcast relay
network to within a constant gap, which is linear in the number of nodes in the network
but is independent of the channel parameters and the power constraints.
We remark that when specialized to the downlink C-RAN model, the gap can be 
improved from linear to logarithmic in the number of users and BSs
\cite{GangulyKim2017}. Further, it may be possible to further enlarge the rate region of the DDF strategy by incorporating a common codeword, as shown for a two-user two-BS C-RAN model with BS corporation in \cite{wang2018, wang2017}.

\subsection{Contributions}

This paper makes an observation that when specialized to the C-RAN model, 
the DDF strategy resembles the compression strategy for C-RAN, but with a crucial difference that instead of performing the compression followed by Marton's multicoding, the DDF performs both the Marton's coding and multivariate
compression jointly at the CP. As practical implementation for performing successive Marton's coding and multivariate compression would likely be easier, we ask in this paper whether there are conditions under which the difference is immaterial. One of the main results of this paper is that under a sum fronthaul constraint, this is indeed true. Thus, for the Gaussian C-RAN under the sum fronthaul constraint, the compression strategy can already achieve the capacity region to within a constant gap. As a further result, for the Gaussian C-RAN under individual fronthaul constraints, this paper also shows that Marton's encoding followed by multivariate compression can achieve the sum capacity to within a constant gap. More specifically, this paper makes the following contributions:
\begin{enumerate}
\item We provide the achievable rate region of a general form of the
compression strategy that includes Marton's multicoding followed by
multivariate compression for the C-RAN model with digital fronthaul in the
first hop and a general discrete memoryless channel (DMC) in the second hop.  
\item We specialize the DDF strategy to the C-RAN model and compare the
coding strategies of the above generalized compression strategy and the DDF strategy. We
observe that DDF is a further generalization in that the Marton's coding and
multivariate compression are done jointly.
\item We analyze the conditions under which such a generalization of the
compression strategy in the DDF strategy does not strictly enlarge the achievable rate region.
\begin{enumerate}
	\item With any DMC on the second hop, the generalized compression
strategy and the DDF strategy achieve the same rate
region under a sum fronthaul constraint.
	\item With a Gaussian network on the second hop, the sum rate achieved by
the above general compression strategy is within a constant gap to the sum capacity
of C-RAN, where the gap is independent of the network parameters.  
\end{enumerate}
\end{enumerate}

\subsection{Notation and organization}

Random variables are denoted by uppercase letters, their realizations by lowercase letters, and the probability distributions by $p(\cdot)$. Sets are denoted by calligraphic letters, while $[1:n]$ denotes the set $\{1,\ldots,n\}$
for all natural numbers $n$. A subscript for a random variable and its realization denotes its node index. A superscript for a random variable or its realization is a time index that denotes a sequence of random variables or its realizations till that index (e.g., $X_l^n = (X_l^1,\ldots,X_l^n)$ or $x_l^n = (x_l^1,\ldots,x_l^n)$). Random variables can be indexed with sets (e.g., $X(\mathcal{S}) = (X_l : l \in \mathcal{S})$). Bold-face lower case letters are used to denote vectors and bold-face upper case letters are used to denote random vectors or matrices. The standard notations for entropy, $H(X)$, and mutual information, $I(X;Y)$, are used.
Total correlation between a group of random variables is denoted by $T(\cdot)$
and is defined as
\begin{equation}
T(X(\mathcal{S})) = \sum_{l \in \mathcal{S}} H(X_l) - H(X(\mathcal{S})).
\end{equation} 
See \cite{watanabe1960information} for motivation of such a definition and some of its properties. We follow the typicality notation of \cite{ElGamalKim2011NetworkIT} and use $\mathcal{T}^{(n)}_\epsilon$ to denote the set of typical sequences of length $n$ with parameter $\epsilon$.

The rest of the paper is organized as follows. Section \ref{sec:model} provides
a mathematical model for the downlink C-RAN.  Section \ref{sec:compression}
provides the achievable rate region results of the generalized compression
strategy. Section \ref{sec:DDF} specializes the distributed decode-forward
strategy to the downlink C-RAN model under consideration. 
In Section \ref{sec:rateregion_comparison}, we compare the rate regions
achieved by the two strategies and provide conditions under which the two coincide. Section \ref{sec:Conclusion} concludes the paper.

\section{System Model}
\label{sec:model}

\begin{figure}
	\centering
	\includegraphics[width=\columnwidth]{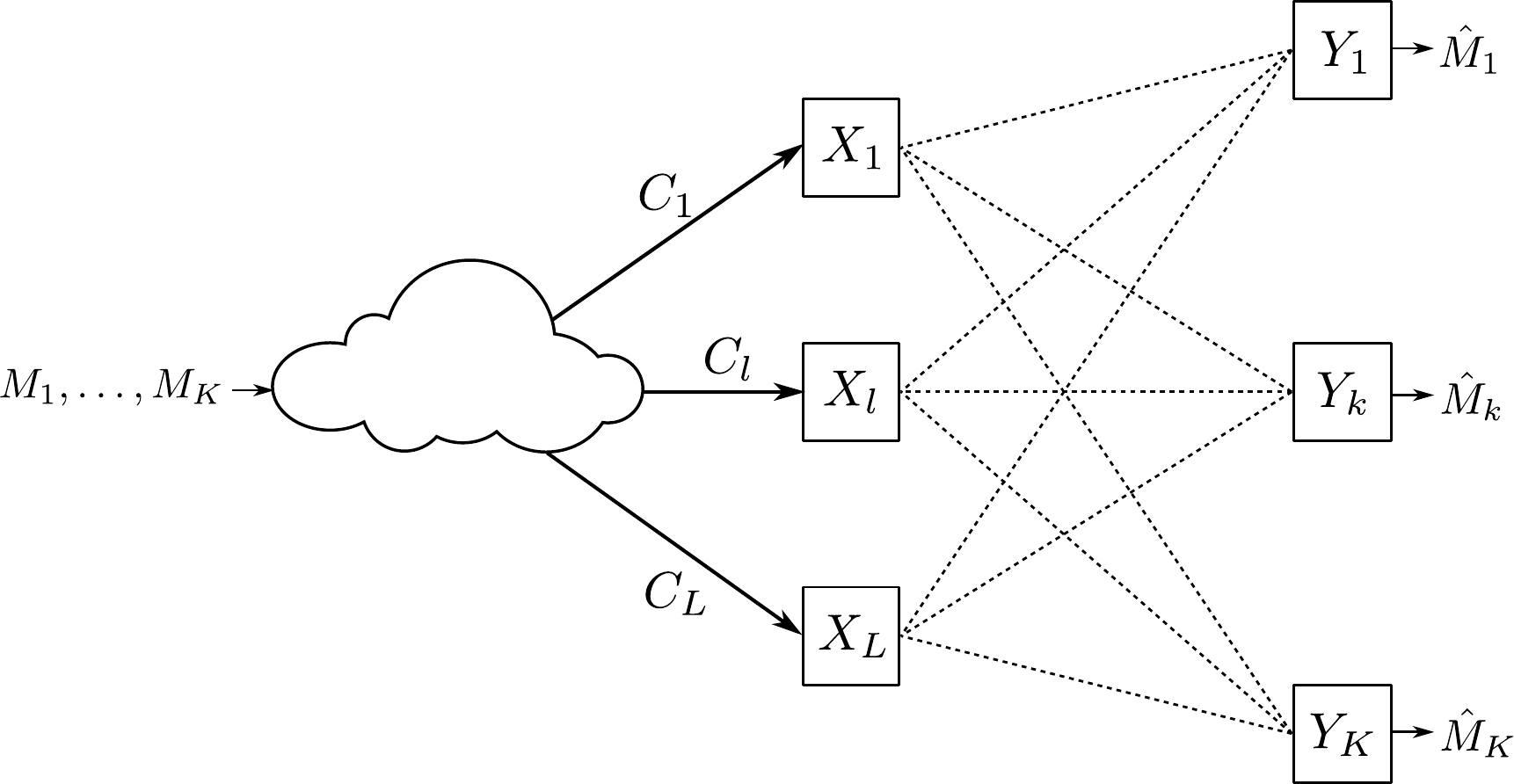}
	\caption{Downlink C-RAN with $L$ BSs, $K$ users, and a channel 
	$p(y_1,\ldots,y_K|x_1,\ldots,x_L)$ between the BSs and the users.} 
	\label{fig:model}
\end{figure}

Consider the downlink of a C-RAN comprising of a CP and $L$ BSs serving $K$ users as shown in Fig.~\ref{fig:model}. 
The CP communicates with BSs
through noiseless fronthaul links of finite capacities, denoted by $C_l$ for BS $l$, $l \in \mathcal{L} \coloneqq [1:L]$. We assume
a discrete memoryless channel $(\mathcal{X}_1 \times \cdots \times
\mathcal{X}_L,p(y_1,\ldots,y_K|x_1,\ldots,x_L),\mathcal{Y}_1 \times \cdots \times \mathcal{Y}_K)$ between 
the BSs and the users.
Let the intended message for user $k$ be denoted by $M_k$, $k \in \mathcal{K} \coloneqq [1:K]$.
A $(2^{nR_1},\ldots,2^{nR_K},n)$ code for the downlink C-RAN consists of a mapping at the CP from the $K$ user messages $(m_1,\ldots,m_K)
\in [1:2^{nR_1}] \times \cdots \times [1:2^{nR_K}]$ to $L$ indices
$(t_1,\ldots,t_L) \in [1:2^{nC_1}] \times \cdots \times [1:2^{nC_L}]$, encoders at the $L$ BSs that map the index $t_l$ to a codeword $x_l^n(t_l)$, and decoders at the $K$ users that estimate $\hat{m}_k$ based on the received signals $y_k^n$. The average probability of error is defined as $P_e^{(n)} = P\{\hat{m}_k \ne
m_k \text{ for some } k \in \mathcal{K}\}$. A rate tuple $(R_1,\ldots,R_K)$ is
achievable if there exists a sequence of codes such that $\lim_{n \rightarrow
\infty} P_e^{(n)} = 0$.

Of particular interest is the special case where the channel between the BSs
and the users is a Gaussian channel such that
\begin{equation}
\label{eq:Gaussian_distribution}
\mathbf{Y} = \mathbf{H} \mathbf{X} + \mathbf{Z},
\end{equation}
where $\mathbf{Y} = \left[Y_1,\ldots,Y_K\right]^T$ are the received signals at the $K$
users, $\mathbf{X} = \left[X_1,\ldots,X_L\right]^T$ are the transmitted signals from the $L$ BSs, $\mathbf{H} = [\mathbf{h}_1,\ldots,\mathbf{h}_k]^T$ is the $K \times L$ channel matrix consisting of channel vectors $\mathbf{h}_1$ to $\mathbf{h}_k$ for users 1 to $K$, respectively, and $\mathbf{Z}=[Z_1,\ldots,Z_K]^T \sim \mathcal{N}(0,\sigma^2 \mathbf{I})$ 
is the additive white Gaussian noise. We assume all the BSs have an average power constraint of $P$ without loss of generality.
For simplicity, both the BSs and the users are assumed to be equipped with a single antenna in this paper. 

\section{Generalized Compression Strategy}

\label{sec:compression}

The compression strategy has been extensively studied in the literature 
\cite{ParkSimeoneSahinShamai2014, SimeoneSomekhPoorShamai09,
ParkSimeoneSahinShamai13}.  The coding strategy involves two steps. First, the
CP jointly encodes the user messages. Second, the encoded signals are
compressed in order to accommodate them through the fronthaul links. Different
options for joint encoding include linear beamforming strategies such as
zero-forcing or regularized zero-forcing, or non-linear beamforming strategy
such as dirty paper coding. Different options for compression include
independent compression or multivariate compression.
The main point of this section is to show that these specific compression
strategies previously studied in \cite{ParkSimeoneSahinShamai2014, SimeoneSomekhPoorShamai09,
ParkSimeoneSahinShamai13} are special forms of a generalized compression
strategy in which joint encoding is performed via Marton's multicoding.
The coding strategy proposed in this paper does not, however, incorporate the possibility of a common codeword, as done in \cite{wang2018, wang2017}.
\begin{theorem}
\label{thm:1}
	A rate tuple $(R_1,\ldots,R_K)$ is achievable for the downlink C-RAN using 
the compression strategy with Marton's multicoding followed by multivariate compression if
	\begin{equation}
	\sum_{k \in \mathcal{D}} R_k < \sum_{k \in \mathcal{D}} I(U_k;Y_k) - T(U(\mathcal{D})) \label{eq:Marton}
	\end{equation}
	for all $\mathcal{D} \subseteq \mathcal{K}$ such that
	\begin{equation}
	\sum_{l \in \mathcal{S}} C_l > I(U(\mathcal{K});X(\mathcal{S})) + T(X(\mathcal{S})) \label{eq:multivariate_compression}
	\end{equation}
	for all $\mathcal{S} \subseteq \mathcal{L}$ for some distribution $p(u_1,\ldots,u_K,x_1,\ldots,x_L)$.
\end{theorem}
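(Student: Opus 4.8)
\emph{Proof plan.} The plan is to prove achievability by a random-coding argument that layers Marton's multicoding for the broadcast hop on top of multivariate covering for the fronthaul hop, and then removes the auxiliary binning rates by Fourier--Motzkin elimination. Fix a joint distribution $p(u_1,\ldots,u_K,x_1,\ldots,x_L)$. For each user $k$ I would generate a Marton subcodebook $\{u_k^n(m_k,j_k)\}$ with $m_k \in [1:2^{nR_k}]$ and $j_k \in [1:2^{n\hat R_k}]$, each sequence drawn i.i.d.\ from the marginal $p(u_k)$ and independently across $k$. For each BS $l$ I would generate a compression codebook $\{x_l^n(t_l)\}$ with $t_l \in [1:2^{nC_l}]$, each sequence drawn i.i.d.\ from the marginal $p(x_l)$ and independently across $l$. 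It is essential that the $x_l^n$ be drawn from the marginal rather than conditioned on the $u$'s, because BS $l$ must reconstruct its transmit signal from the fronthaul index $t_l$ alone.

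Encoding at the CP proceeds in two covering steps. Given $(m_1,\ldots,m_K)$, the CP first searches for indices $(j_1,\ldots,j_K)$ with $(u_1^n(m_1,j_1),\ldots,u_K^n(m_K,j_K)) \in \mathcal{T}_\epsilon^{(n)}$; by the multivariate covering lemma this succeeds with high probability provided $\sum_{k \in \mathcal{D}} \hat R_k > T(U(\mathcal{D}))$ for every $\mathcal{D} \subseteq \mathcal{K}$. Fixing the resulting $u^n(\mathcal{K})$, the CP then searches for fronthaul indices $(t_1,\ldots,t_L)$ with $(u^n(\mathcal{K}),x_1^n(t_1),\ldots,x_L^n(t_L)) \in \mathcal{T}_\epsilon^{(n)}$; applying the multivariate covering lemma a second time---now covering the fixed sequence $u^n(\mathcal{K})$ with the independently drawn $X$-codebooks---this succeeds with high probability provided $\sum_{l \in \mathcal{S}} C_l > \sum_{l\in\mathcal{S}} H(X_l) - H(X(\mathcal{S}) \mid U(\mathcal{K}))$ for every $\mathcal{S} \subseteq \mathcal{L}$. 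A one-line identity rewrites the right-hand side as $I(U(\mathcal{K});X(\mathcal{S})) + T(X(\mathcal{S}))$, which is exactly \eqref{eq:multivariate_compression}. The CP sends $t_l$ over link $l$ and BS $l$ transmits $x_l^n(t_l)$.

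For decoding, user $k$ declares $\hat m_k$ if it is the unique message such that $(u_k^n(\hat m_k,j_k),y_k^n) \in \mathcal{T}_\epsilon^{(n)}$ for some $j_k$. Since $Y(\mathcal{K})$ depends on $(U(\mathcal{K}),X(\mathcal{L}))$ only through $X(\mathcal{L})$, the Markov relation $U(\mathcal{K})-X(\mathcal{L})-Y(\mathcal{K})$ holds; hence, conditioned on the covering step having produced a jointly typical $(u^n(\mathcal{K}),x^n(\mathcal{L}))$, the conditional typicality (Markov) lemma guarantees that feeding $x^n(\mathcal{L})$ through the memoryless channel yields $(u^n(\mathcal{K}),x^n(\mathcal{L}),y^n(\mathcal{K})) \in \mathcal{T}_\epsilon^{(n)}$ with high probability, so $(u_k^n,y_k^n)$ is typical and the correct codeword passes the test. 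Each competing codeword $u_k^n(m_k',j_k')$ with $m_k' \ne m_k$ is independent of $y_k^n$, so by the packing lemma the probability that some such codeword is jointly typical with $y_k^n$ is negligible provided $R_k + \hat R_k < I(U_k;Y_k)$.

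It remains to assemble the constraints: a tuple is achievable whenever there exist nonnegative $\{\hat R_k\}$ satisfying $\sum_{k\in\mathcal{D}}\hat R_k > T(U(\mathcal{D}))$ for all $\mathcal{D}$ and $R_k + \hat R_k < I(U_k;Y_k)$ for all $k$, in addition to \eqref{eq:multivariate_compression}. Such $\hat R_k$ exist precisely when $T(U(\mathcal{D})) < \sum_{k\in\mathcal{D}}\bigl(I(U_k;Y_k) - R_k\bigr)$ for every $\mathcal{D}$, which upon rearranging is \eqref{eq:Marton}; this is the content of the Fourier--Motzkin step. I expect the main obstacle to be the second covering step together with its coupling to decoding: one must verify that covering the fixed $u^n(\mathcal{K})$ with marginally generated multivariate $X$-codewords still imposes the target joint type $p(u,x)$---this is exactly where the total-correlation penalty $T(X(\mathcal{S}))$ appears---and that the Markov lemma can be applied to the selected, and therefore statistically tilted, codeword so that the induced channel-output statistics match those assumed in the per-user rate $I(U_k;Y_k)$.
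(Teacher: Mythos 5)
Your proposal is correct and follows essentially the same route as the paper's own proof: marginally generated Marton and compression codebooks, two successive applications of the multivariate covering lemma (yielding $\sum_{k\in\mathcal{D}}\tilde R_k > T(U(\mathcal{D}))$ and the fronthaul constraint $\sum_{l\in\mathcal{S}} C_l > I(U(\mathcal{K});X(\mathcal{S})) + T(X(\mathcal{S}))$), joint-typicality decoding giving $R_k+\tilde R_k < I(U_k;Y_k)$, and Fourier--Motzkin elimination of the auxiliary rates. The extra care you take in rewriting the covering condition via $\sum_{l\in\mathcal{S}} H(X_l) - H(X(\mathcal{S})\mid U(\mathcal{K}))$ and in flagging the Markov-lemma step for the selected codewords goes slightly beyond the paper's sketch but does not alter the argument.
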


The proof of achievibility is in Appendix \ref{appendix:proof_compression}. The set of inequalities \eqref{eq:Marton} represents the achievable user
rates using Marton's multicoding for broadcast channels. In linear beamforming,
the $U$'s are just the messages and are thus independent of each other.  The
advantage of using Marton's multicoding is to introduce correlation among $U$'s
for the possibility of increased rates. But doing so incurs a penalty that
depends on the total correlation present among $U$'s. 
DPC is an example of such Marton's coding.

One way to implement Marton' coding is through successive encoding of user
messages. Assuming without loss of generality that the encoding order is user
$1,\ldots,K$. The achievable rate for user 1 is $I(U_1;Y_1)$. Treating user
1's message as known interference, user 2 achieves a rate of 
$I(U_2;Y_2) - I(U_1;U_2)$; and user $k$ achieves a rate of
$I(U_k;Y_k)-I(U_k;U_{k-1},\ldots,U_1)$. We remark that, as pointed out in
\cite{zhang2007successive}, there is a subtle issue that such successive
encoding may not achieve the entire Marton's region. The reason is that 
even though the set function $\sum_{k \in \mathcal{D}} I(U_k;Y_k) - T(U(\mathcal{D}))$ satisfies the submodular property, it is not guaranteed that it satisfies the monotone property that the successive user rates
$I(U_k;Y_k) - I(U_k;U_{k-1},\ldots,U_1)$ are always non-negative. 
Hence, the Marton's region itself is not guaranteed to be a polymatroid.  
The rest of this paper ignores this subtlty and assumes that
the Marton' rate region is polymatroid so that we can use successive encoding
to achieve the corner points of the rate region.

The set of inequalities \eqref{eq:multivariate_compression} represents the multivariate compression of
$U(\mathcal{K})$ into $X$'s that are transmitted by the BSs. If the BSs were
co-located and can cooperate, the amount of quantization needed for compression is
simply the first term $I(U(\mathcal{K});X(\mathcal{S}))$. If the BSs are
distributed and cannot cooperate, there is a penalty in terms of the correlation
between the signals transmitted by the BSs.

Similar to the successive encoding for the Marton's region, the multivariate
compression can also be implemented in a successive manner
\cite{ParkSimeoneSahinShamai13}. Without loss of generality, let's assume that the
encoding order is BS $1,\ldots,L$. The fronthaul required to compress the
signal for BS 1 is $I(U(\mathcal{K});X_1)$. After compressing the signal
for BS 1, the fronthaul required to compress BS 2's signal is given by
$I(U(\mathcal{K});X_2) + I(X_2;X_1)$; and for any BS $l$ the
fronthaul required is $I(U(\mathcal{K});X_l) +
I(X_l;X_{l-1},\ldots,X_1)$. It can be verified that the fronthaul region in
general is a contra-polymatroid \cite{zhang2007successive}.

The above achievability region has been presented at \cite{Wei_CTW} and is 
subsequently generalized in \cite{wang2018} to the case with
common information and BS cooperation where there are two BSs in the C-RAN.
We now specialize the generalized compression strategy for Gaussian C-RAN \eqref{eq:Gaussian_distribution} using various choices for the distribution $p(u_1,\ldots,u_K,x_1,\ldots,x_L)$ and show how it results in the known compression strategies in the literature. We assume 2 BSs and 2 users for simplicity.

Consider the strategy of linear beamforming followed by compression. In this case, we choose the messages $U$'s as independent Gaussian random variables, compute the beamformed signals to be transmitted by the BSs at the cloud, then compress using either independent compression or multivariate compression.  Mathematically, we express the
distribution $p(u_1,u_2,x_1,x_2)$ as
\begin{equation}
\label{eq:beamforming_compression}
\mathbf{X} = \mathbf{W} \mathbf{U} + \mathbf{N},
\end{equation}
where $\mathbf{W} = [\mathbf{w}_1, \mathbf{w}_2]$ is a beamforming
matrix with beamformers $\mathbf{w}_1$ and $\mathbf{w}_2$ for users 1 and 2,
respectively, and $\mathbf{N}$ is the quantization noise, assumed to be
a Gaussian vector $\mathcal{N}(0, \mathbf{Q})$. Here, $\mathbf{U}=[U_1,U_2]^T
\sim \mathcal{N}(0,\mathbf{I})$ are the independent message signals for the two
users.
The achievable user rates of the generalized compression strategy with this
choice of $\mathbf{U}$ are given by
\begin{align}
R_1^{\rm{linear}} &= I(U_1;Y_1) \\
&= \frac{1}{2} \log \left( 1 + \frac{\mathbf{h}_1^T \mathbf{w}_1
\mathbf{w}_1^T \mathbf{h}_1}{\mathbf{h}_1^T \mathbf{w}_2 \mathbf{w}_2^T
\mathbf{h}_1 + \mathbf{h}_1^T \mathbf{Q} \mathbf{h}_1 + \sigma^2}  \right)
\label{eq:R1}
\end{align}
for user $1$, and similarly for user 2
\begin{align}
R_2^{\rm{linear}} &= I(U_2;Y_2) \\
&= \frac{1}{2} \log \left( 1 + \frac{\mathbf{h}_2^T \mathbf{w}_2
\mathbf{w}_2^T \mathbf{h}_2}{\mathbf{h}_2^T \mathbf{w}_1 \mathbf{w}_1^T
\mathbf{h}_2 + \mathbf{h}_2^T \mathbf{Q} \mathbf{h}_2 + \sigma^2}  \right).
\label{eq:R2}
\end{align}
Note that the covariance matrix of $\mathbf{N}$ enters the rate expression 
as an additional noise term. Depending on the compression strategy used,
$\mathbf{Q}$ is either diagonal in case of independent compression owing to
independent noise components among the compressed BS signals, or a full matrix
in case of multivariate compression, due to the introduced correlation among
the noise components of $\mathbf{N}$.
In the independent compression case, let
$\mathbf{Q}=\diag(q_{11},q_{22})$, and $\mathbf{w}_1=[w_{11},w_{12}]^T$
and $\mathbf{w}_2=[w_{21},w_{22}]^T$. 
The amount of fronthaul needed to support compression at BS 1 is
\begin{align}
C_1^{\rm{linear,indep}} &= I(X_1;U_1,U_2) \\
&= \frac{1}{2} \log \left(1 + \frac{w_{11}^2+w_{21}^2}{q_{11}}\right),
\label{eq:C1_independent_compression}
\end{align}
and similarly for BS 2,
\begin{align}
C_2^{\rm{linear,indep}} &= I(X_2;U_1,U_2) \\ 
&= \frac{1}{2} \log \left(1 +  \frac{w_{12}^2+w_{22}^2}{q_{22}}\right).
\label{eq:C2_independent_compression} 
\end{align}
For the multivariate compression, the required fronthaul rates $(C_1,C_2)$ must
be inside a rate region. 
A corner point of the region assuming a successive compression strategy with the 
order of compression to be BS 1 followed by BS 2 is as following.
For BS 1, $C_1^{\rm{linear,multi}}$ is exactly the same as the independent compression case (\ref{eq:C1_independent_compression}),
but for BS 2, we have
\begin{align}
C_2^{\rm{linear,multi}} &= I(X_2;{U}_1,{U}_2|X_1) + I(X_1;X_2) \\
&= I(X_2;{U}_1,{U}_2) + I(X_1;X_2|{U}_1,{U}_2)\\
&= \frac{1}{2} \log \left( 1 + \frac{w_{12}^2+w_{22}^2}{q_{22}} \right)
\nonumber \\
& \qquad
+  \frac{1}{2} \log \left( \frac{q_{22}}{q_{22} - q_{21} q_{11}^{-1}q_{12}}\right)
\end{align}
where $\mathbf{Q} = \left[\begin{array}{cc} q_{11} & q_{12} \\  q_{21} & q_{22} \end{array} \right]$ is a full matrix 
whose correlation structure, although leading to higher $(C_1,C_2)$,
nevertheless allows possible reduction in the effective noise in 
the achievable rates (\ref{eq:R1}) and (\ref{eq:R2}), thereby potentially providing an overall
benefit. These derived rate expressions can be shown to be equivalent to that
in \cite{ParkSimeoneSahinShamai13}.

The linear beamforming strategy can be improved by
introducing correlation between the $U$'s. One example of using such
correlation is DPC, which is capacity achieving for the Gaussian vector
broadcast channel (i.e., with infinite $C_1$ and $C_2$). With DPC, the $U$'s
are now random vectors. Although using the $U$'s designed for the broadcast
channel for C-RAN is not necessarily optimal when $C_1$ and $C_2$ are finite,
it is nevertheless instructive to write down the rate expressions to gain some
insight.
Assume an ordering of DPC with user 1 followed by user 2.  The auxiliary random variables for DPC can be constructed as follows.  Let $\mathbf{S}_1$ and $\mathbf{S}_2$ be two independent Gaussian vectors with covariance matrices $\mathbf{K}_1$ and $\mathbf{K}_2$.  Fix $\mathbf{N} \sim \mathcal{N}(0, \mathbf{Q})$.
We choose
\begin{align}
\mathbf{U}_1 = \mathbf{S}_1, \mathbf{U}_2 = \mathbf{S}_2 + \mathbf{A} \mathbf{S}_1, \mathbf{X} = \mathbf{S}_1 + \mathbf{S}_2 + \mathbf{N},
\end{align}
where
$\mathbf{A} = \mathbf{K}_2 \mathbf{h}_2 \left( \mathbf{h}_2^T \left( \mathbf{K}_2 + \mathbf{Q} \right)
\mathbf{h}_2 + \sigma^2 \right)^{-1} \mathbf{h}_2^T$ and $\mathbf{N}$ is the quantization noise.
This choice of the auxiliary variables allows the interference from user 1 to
be completely pre-subtracted from user 2 \cite{color_paper}, resulting in the following 
achievable user rates 
\begin{align}
	R_1^{\rm{DPC}} &= I(\mathbf{U}_1;Y_1) \\
	&= \frac{1}{2} \log \left(1 +
\frac{\mathbf{h}_1\mathbf{K}_1\mathbf{h}_1^T}{\mathbf{h}_1\mathbf{K}_2\mathbf{h}_1^T
+ \mathbf{h}_1\mathbf{Q}\mathbf{h}_1^T + \sigma^2 }\right)
\end{align}
for user $1$ who sees user $2$ as noise, 
and
\begin{align}
	R_2^{\rm{DPC}} &= I(\mathbf{U}_2;Y_2) - I(\mathbf{U}_1;\mathbf{U}_2) \\
	&= I(\mathbf{X};Y_2 | \mathbf{S}_1) \\
	&= \frac{1}{2} \log \left( 1 +
\frac{\mathbf{h}_2\mathbf{K}_2\mathbf{h}_2^T}{\mathbf{h}_2\mathbf{Q}\mathbf{h}_2^T + \sigma^2}\right)
\end{align}
for user 2, who no longer sees user 1 as interference.
The required fronthaul rates depend on whether independent or multivariate
compression is performed.
Let the covariance matrix of $\mathbf{S}_1 + \mathbf{S}_2$ be $\mathbf{K}_1 +
\mathbf{K}_2 = \left[\begin{array}{cc} \Sigma_{11} & \Sigma_{12} \\
\Sigma_{21} & \Sigma_{22} \end{array} \right]$ and the covariance matrix of the quantization noise $\mathbf{N}$ be $\mathbf{Q} = \left[\begin{array}{cc} q_{11} & q_{12} \\  q_{21} & q_{22} \end{array} \right]$. 
With independent compression, we have $q_{12}=q_{21}=0$ and 
\begin{align}
\label{eq:C1_DPC_indep}
C_1^{\rm{DPC,indep}} &= I(X_1;\mathbf{U}_1,\mathbf{U}_2) \\ 
&= \frac{1}{2} \log \left(1+ \frac{ \Sigma_{11} }{{q}_{11}} \right)
\end{align}
\begin{align}
\label{eq:C2_DPC_indep}
C_2^{\rm{DPC,indep}} &= I(X_2;\mathbf{U}_1,\mathbf{U}_2) \\
&= \frac{1}{2} \log \left(1+ \frac{ \Sigma_{22} }{{q}_{22}} \right). 
\end{align}
For multivariate compression, assuming the corner point of the fronthaul rate
region with the ordering of compression to be BS 1
followed by BS 2, we have $C_1^{\rm{DPC,multi}}$ for BS 1 exactly the same as in the case of
independent compression (\ref{eq:C1_DPC_indep}),
but for BS 2, we need additional fronthaul capacity given by 
\begin{align}
	C_2^{\rm{DPC,multi}} &= I(X_2;\mathbf{U}_1,\mathbf{U}_2|X_1) + I(X_1;X_2) \\
	    &= I(X_2;\mathbf{U}_1,\mathbf{U}_2) + I(X_1;X_2|\mathbf{U}_1,\mathbf{U}_2)
		\\
	    &= \frac{1}{2} \log \left(1+ \frac{ \Sigma_{22} }{{q}_{22}} \right) \nonumber \\
& \qquad + \frac{1}{2} \log \left( \frac{{q}_{22}}{{q}_{22} - {q}_{21} {q}_{11}^{-1}{q}_{12}}\right).
\end{align}
These rate expressions for DPC over C-RAN are equivalent to the ones given in
\cite{simeone2016cloud}. They can be interpreted as the compression of $\mathbf{X}$
at the CP for transmission to the BSs. The above more rigorous derivation is based
on transmitting $\mathbf{U}$ to the BSs via compression.

\section{Distributed Decode-Forward}
\label{sec:DDF}

The main objective of this paper is to understand whether the generalized
compression strategy can approximately achieve the capacity region of the Gaussian C-RAN
model. Toward this end, we examine the DDF strategy 
\cite{Lim2017DistributedDecodeForward}, which is a general coding scheme for
broadcasting multiple messages over a general relay network that combines
Marton's coding for the broadcast channel with partial decode-forward for the
relay channel. The coding scheme involves using auxiliary random variables at
each node in the network that implicitly carry information about the user messages.
By specializing the DDF strategy to the C-RAN setup, we write down a succinct form of the achievable rate region using DDF and a simplified coding strategy that can be readily compared with the generalized compression strategy.

\begin{theorem}[\cite{Lim2017DistributedDecodeForward}]
\label{thm:2}
A rate tuple $(R_1,\ldots,R_K)$ is achievable for the downlink C-RAN using the DDF strategy if
	\begin{align}
	\sum_{k \in \mathcal{D}} R_k & <  \sum_{k \in \mathcal{D}} I(U_k;Y_k) + \sum_{l \in \mathcal{S}} C_l - T(U(\mathcal{D}),X(\mathcal{S}))\\
	& = \sum_{k \in \mathcal{D}} I(U_k;Y_k) - T(U(\mathcal{D})) \nonumber \\
	& \qquad + \sum_{l \in \mathcal{S}} C_l - I(U(\mathcal{D});X(\mathcal{S})) - T(X(\mathcal{S})) \label{eq:DDF_rate}
	\end{align}
	for all $\mathcal{D} \subseteq \mathcal{K}$ and $\mathcal{S} \subseteq
\mathcal{L}$ for some distribution $p(u_1,\ldots,u_K,x_1,\ldots,x_L)$.
\end{theorem}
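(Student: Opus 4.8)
The plan is to obtain this region not as a standalone calculation but as a specialization of the general distributed decode-forward achievable region of \cite{Lim2017DistributedDecodeForward} to the two-hop C-RAN topology, and then to rewrite the resulting cut constraints into the displayed form using a total-correlation decomposition. So the work splits into two parts: a network-mapping part that produces the first line, and a purely algebraic part that produces the equivalent second line.

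First I would cast the C-RAN in the DDF framework: the CP is the source node holding all $K$ messages, the $L$ BSs are relay nodes, and the $K$ users are the destinations. The feature that distinguishes this instance from a generic DM relay network is that the first hop consists of orthogonal noiseless bit-pipes of rate $C_l$ rather than a shared noisy channel, while the second hop is the given DMC $p(y_1,\ldots,y_K|x_1,\ldots,x_L)$. I would assign the auxiliary $U_k$ to carry message $k$ across the Marton broadcast layer and identify $X_l$ with the transmit signal of BS $l$, so that the single test distribution $p(u_1,\ldots,u_K,x_1,\ldots,x_L)$ governs the region. Then I would instantiate the three ingredients of the DDF achievability proof for this network. (i) At the CP, generate $U(\mathcal{K})$ and $X(\mathcal{L})$ codebooks from the product of their marginals and, for each message tuple, select jointly typical representatives by multicoding — simultaneous Marton binning for the $U$'s and multivariate covering for the $X$'s; the existence of such representatives costs rate that aggregates into the joint total-correlation term $T(U(\mathcal{D}),X(\mathcal{S}))$. (ii) Since the link to BS $l$ is a noiseless pipe of rate $C_l$, the CP can convey the chosen $X_l$-index whenever the description rate does not exceed $C_l$, so a relay subset $\mathcal{S}$ contributes exactly $\sum_{l\in\mathcal{S}} C_l$ with no channel-dependent loss. (iii) Each user $k$ decodes $U_k$ from $Y_k$ by joint typicality, contributing the packing margin $I(U_k;Y_k)$. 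Collecting the covering, forwarding, and packing requirements and eliminating the auxiliary binning rates by the flow/Fourier–Motzkin argument already carried out in \cite{Lim2017DistributedDecodeForward} yields the first line for every pair $(\mathcal{D},\mathcal{S})$.

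The second displayed form then follows purely by the definition of total correlation. Writing $T(U(\mathcal{D}),X(\mathcal{S}))$, $T(U(\mathcal{D}))$, $T(X(\mathcal{S}))$, and $I(U(\mathcal{D});X(\mathcal{S}))$ in terms of entropies, one verifies the identity
\[
T(U(\mathcal{D}),X(\mathcal{S})) = T(U(\mathcal{D})) + T(X(\mathcal{S})) + I(U(\mathcal{D});X(\mathcal{S})),
\]
and substituting it into the first line regroups the bound into a ``Marton part'' $\sum_{k\in\mathcal{D}} I(U_k;Y_k) - T(U(\mathcal{D}))$ and a ``compression part'' $\sum_{l\in\mathcal{S}} C_l - I(U(\mathcal{D});X(\mathcal{S})) - T(X(\mathcal{S}))$. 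This is precisely the form that aligns term-by-term with Theorem~\ref{thm:1} and makes the subsequent comparison of the two strategies transparent.

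I expect the main obstacle to be step (ii): faithfully reducing the general DDF cut value — originally defined through a noisy relay channel and its input distribution — to the clean $\sum_{l\in\mathcal{S}} C_l$ contribution appropriate for orthogonal noiseless fronthaul, and confirming that the rate-limited noiseless first hop introduces no coupling among the forwarding indices of different BSs beyond what is already captured by $T(X(\mathcal{S}))$. Equally delicate is verifying that the active cuts of the general region are exactly those indexed by pairs $(\mathcal{D},\mathcal{S})$, with $\mathcal{S}$ the relays and $\mathcal{D}$ the destinations placed on the far side of the cut, and that no other cut tightens the bound. By contrast, the total-correlation identity of the third paragraph is routine and requires only an entropy count.
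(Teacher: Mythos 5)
Your proposal matches the paper's proof in all essentials: the paper's Appendix~B likewise specializes the DDF coding scheme directly to the C-RAN — i.i.d.\ codebooks $u_k^n(m_k,l_k)$ and $x_l^n(t_l)$ generated from the marginals, \emph{joint} selection of all indices $[l_1{:}l_K,t_1{:}t_L]$ via the multivariate covering lemma at cost $\sum_{k\in\mathcal{D}}\tilde{R}_k+\sum_{l\in\mathcal{S}}C_l > T(U(\mathcal{D}),X(\mathcal{S}))$, lossless forwarding of $t_l$ over the rate-$C_l$ fronthaul, typicality decoding with $R_k+\tilde{R}_k < I(U_k;Y_k)$, and elimination of the auxiliary rates $\tilde{R}_k$ — and your identity $T(U(\mathcal{D}),X(\mathcal{S}))=T(U(\mathcal{D}))+T(X(\mathcal{S}))+I(U(\mathcal{D});X(\mathcal{S}))$ is exactly the entropy count behind the theorem's second displayed line. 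The only cosmetic difference is that the paper re-derives the scheme from scratch rather than mapping cuts of the general region of \cite{Lim2017DistributedDecodeForward}, which sidesteps the cut-reduction verification you flag as your main obstacle.
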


The proof of achievibility is in Appendix \ref{appendix:proof_ddf}.
Comparing the DDF coding strategy of Theorem \ref{thm:2} with that of the generalized compression strategy of Theorem \ref{thm:1}, we observe that the DDF strategy generalizes the compression strategy by combining Marton's multicoding with multivariate compression and jointly encoding the Marton's and compression codewords. The key difference is that, in the compression strategy, Marton's codewords are formed first, then the multivariate compression codewords are computed in a sequential order. Note that the rate region in Theorem \ref{thm:1} is in general a subset of the rate region in Theorem \ref{thm:2} as any distribution $p(u_1,\ldots,u_K,x_1,\ldots,x_L)$ satisfying the multivariate compression constraints \eqref{eq:multivariate_compression} results in generalized compression rates \eqref{eq:Marton} which are also achievable in the form \eqref{eq:DDF_rate} using the DDF strategy.

A key advantage of enlarging the allowable distributions to beyond the ones that explicitly satisfy the fronthaul constraints is that it permits a proof of the result that the DDF strategy can achieve to within a constant gap to the cut-set bound of the general Gaussian broadcast relay channel \cite{Lim2017DistributedDecodeForward}. The ingenious choice of $p(u_1,\ldots,u_K|x_1,\ldots,x_L)$ proposed in \cite{Lim2017DistributedDecodeForward} that accomplishes this task is a
distribution for $p(u_1,\ldots,u_K|x_1,\ldots,x_L)$  that tries to mimic the Gaussian channel distribution $p(y_1,\ldots,y_K|x_1,\ldots,x_L)$. 
We now specialize the result of \cite{Lim2017DistributedDecodeForward} to the C-RAN setup \eqref{eq:Gaussian_distribution}. The DDF strategy can be shown to achieve to within a constant gap to the cut-set outer bound by choosing $\mathbf{X}$ to be a vector
of $L$ independent Gaussian random variables $\mathcal{N}(0,P)$ and by choosing 
\begin{equation}
\label{eq:constant_gap_distribution}
	\mathbf{U} = \mathbf{H} \mathbf{X} + \tilde{\mathbf{Z}},
\end{equation}
where $\tilde{\mathbf{Z}} \sim \mathcal{N}(0,\sigma^2 I)$ is independent of $\mathbf{Z}$.
With this choice of $p(u_1,\ldots,u_K|x_1,\ldots,x_L)$, we have 

\begin{corollary}[\cite{GangulyKim2017}] 
\label{cor:individual_fronthaul}
With Gaussian $p(y_1,\ldots,y_K|x_1,\ldots,x_L)$ on the second hop of the C-RAN
model and individual fronthaul constraints $(C_1,\ldots,C_K)$, the DDF strategy
achieves a rate region within a constant gap to the capacity region of
C-RAN, where the gap is independent of the channel, the BS power constraints, and the fronthaul constraints, and only depends on the number of BSs and users. 
\end{corollary}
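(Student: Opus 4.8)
The plan is to evaluate the DDF region of Theorem~\ref{thm:2} at the mimicking distribution \eqref{eq:constant_gap_distribution}, namely $X_1,\dots,X_L$ i.i.d.\ $\mathcal{N}(0,P)$ and $\mathbf{U}=\mathbf{H}\mathbf{X}+\tilde{\mathbf{Z}}$, and to compare it against the cut-set outer bound of the two-hop network. For a user subset $\mathcal{D}\subseteq\mathcal{K}$ and the cut that places the CP and the BSs in $\mathcal{S}^c$ on the source side and the users in $\mathcal{D}$ together with the BSs in $\mathcal{S}$ on the destination side, the cut-set bound reads $\sum_{k\in\mathcal{D}}R_k \le \sum_{l\in\mathcal{S}}C_l + I(X(\mathcal{S}^c);Y(\mathcal{D})\mid X(\mathcal{S}))$. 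I would align the DDF subset $\mathcal{S}$ in \eqref{eq:DDF_rate} with this cut, so that the fronthaul terms $\sum_{l\in\mathcal{S}}C_l$ appear on both sides and cancel; moreover, since the chosen $X_l$ are independent, $T(X(\mathcal{S}))=0$. The whole comparison then collapses to a pure mutual-information inequality for each pair $(\mathcal{D},\mathcal{S})$.

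The core step is to show that the residual difference $\Delta(\mathcal{D},\mathcal{S})$ between the DDF bound and this Gaussian cut-set bound is bounded below by a constant. Expanding both sides in differential entropies $h(\cdot)$ and using $\sum_{k\in\mathcal{D}}I(U_k;Y_k)-T(U(\mathcal{D}))=h(U(\mathcal{D}))-\sum_{k\in\mathcal{D}}h(U_k\mid Y_k)$, the $\sum_{l\in\mathcal{S}}C_l$ terms cancel, $T(X(\mathcal{S}))$ vanishes, the unconditional $h(U(\mathcal{D}))$ cancels, and I am left with
\begin{align}
\Delta(\mathcal{D},\mathcal{S}) &= h(U(\mathcal{D})\mid X(\mathcal{S})) - h(Y(\mathcal{D})\mid X(\mathcal{S})) \nonumber\\
&\quad + \sum_{k\in\mathcal{D}}\big(h(Y_k\mid\mathbf{X}) - h(U_k\mid Y_k)\big).
\end{align}
Two properties of the mimicking choice then finish the simplification. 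First, conditioned on $X(\mathcal{S})$ the vectors $U(\mathcal{D})$ and $Y(\mathcal{D})$ are identically distributed, since each $U_k$ and $Y_k$ equals $\mathbf{h}_k^T\mathbf{X}$ corrupted by an i.i.d.\ $\mathcal{N}(0,\sigma^2)$ term and the residual randomness from $X(\mathcal{S}^c)$ is common to both; hence the bracketed difference is zero. Second, given $\mathbf{X}$ the outputs $Y_k$ are conditionally independent, which is what splits the last term into the displayed sum.

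What remains is a sum of scalar Gaussian terms. Writing $v_k=\|\mathbf{h}_k\|^2P$, each summand equals $\tfrac12\log\!\big(\sigma^2/\mathrm{Var}(U_k\mid Y_k)\big)=\tfrac12\log\frac{v_k+\sigma^2}{2v_k+\sigma^2}$, which lies in $[-\tfrac12,0]$ for every $v_k\ge 0$, independently of the channel $\mathbf{H}$, the power $P$, the noise $\sigma^2$, and the fronthaul capacities. Hence $\Delta(\mathcal{D},\mathcal{S})\ge -|\mathcal{D}|/2\ge -K/2$: for every $\mathcal{D}$ and matched cut the DDF bound falls short of the Gaussian cut-set bound by at most a constant.

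The main obstacle is that the outer bound maximizes $I(X(\mathcal{S}^c);Y(\mathcal{D})\mid X(\mathcal{S}))$ over all admissible inputs, whereas the achievable scheme is locked to the i.i.d.\ Gaussian input. I would close this with the standard determinant estimate that, for a Gaussian MIMO channel under the per-BS power constraint, the i.i.d.\ covariance $P\mathbf{I}$ is within a dimension-dependent constant of order $\tfrac12\min(K,L)\log L$ of the capacity-optimal covariance. The remaining delicate point is to upgrade the per-$\mathcal{D}$ estimate to the whole rate region rather than only the sum rate: both the capacity outer bound and the DDF region are intersections, over all $\mathcal{D}$, of sum-rate constraints, so subtracting a single constant $\delta=\delta(K,L)$ from every $R_k$ lowers each $\sum_{k\in\mathcal{D}}R_k$ by $|\mathcal{D}|\delta$, which dominates the per-$\mathcal{D}$ slack $|\mathcal{D}|/2$ plus the covariance gap. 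Taking the minimum over cuts on both sides then shows that any cut-set-feasible tuple, shifted by $\delta\mathbf{1}$, lies in the DDF region, establishing the claimed constant gap depending only on $K$ and $L$.
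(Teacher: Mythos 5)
Your proposal is correct and follows essentially the same route as the paper, which establishes Corollary \ref{cor:individual_fronthaul} by deferring to \cite{GangulyKim2017} and \cite{Lim2017DistributedDecodeForward}: evaluate the DDF region of Theorem \ref{thm:2} at the constant-gap distribution \eqref{eq:constant_gap_distribution} with i.i.d.\ $\mathcal{N}(0,P)$ inputs and compare it cut-by-cut against the cut-set bound. Your per-cut reduction (using $T(U(\mathcal{D}),X(\mathcal{S}))=T(U(\mathcal{D}))+I(U(\mathcal{D});X(\mathcal{S}))$ for independent $X_l$, the identity $h(U(\mathcal{D})\mid X(\mathcal{S}))=h(Y(\mathcal{D})\mid X(\mathcal{S}))$ for the mimicking choice, and the scalar bound $\frac{1}{2}\log\frac{v_k+\sigma^2}{2v_k+\sigma^2}\ge-\frac{1}{2}$), combined with the standard covariance-relaxation estimate to handle the maximization over inputs in the outer bound and the uniform per-user shift to pass from per-cut slack to the whole region, is precisely the mechanism behind the cited result.
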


A natural question at this point is whether we can use the generalized
compression strategy to accomplish the same. The next section gives some
partial answers in the affirmative but under specific conditions.

\section{Compression versus DDF}
\label{sec:rateregion_comparison}

DDF generalizes the compression strategy, so the achievable rate region of the generalized compression strategy is a subset of the DDF region in general. This section asks the question of whether this subset inclusion is strict. The main result here is that, under certain conditions, the rate regions of the two strategies actually coincide. Specifically, we show that under a sum fronthaul constraint, the rate regions of the two strategies coincide for any discrete memory channel on the second hop of C-RAN. In other words, under a sum fronthaul constraint, performing Marton's coding and multivariate compression separately does not reduce the achievable user rates.  As a second result of this section, we show that in the special case of Gaussian networks but under individual fronthaul constraint, the compression strategy achieves the sum capacity of C-RAN to within a constant gap.  These results are useful, because successive Marton's coding and multivariate compression is likely easier to implement than the joint encoding for DDF. For example, an architecture based on successive estimation of minimum mean-squared error and per-BS compression to achieve the multivariate compression region is proposed in \cite{ParkSimeoneSahinShamai2014}, 
while polar coding based scheme to achieve the general Marton's region for a 2-user broadcast channel is proposed in \cite{mondelli2015achieving}.

\subsection{Rate Region Under Sum Fronthaul Constraint}

\begin{definition}
Consider the closure of the convex hull of achievable rate-fronthaul tuples
$(R_1,\ldots,R_K,C_1,\ldots,C_L)$ using the generalized compression strategy
satisfying (\ref{eq:Marton})-(\ref{eq:multivariate_compression}) over all joint
distributions $p(u_1,\ldots,u_K,x_1,\ldots,x_L)$ satisfying possibly input
constraints on $(x_1,\ldots,x_L)$.  Define
$\mathcal{R}^{\textrm{s}}_{\textrm{COM}}(C)$ to be the projection of the above
set along a sum fronthaul constraint $C$, i.e., the set of rate tuples
$(R_1,\ldots,R_K)$ such that $C_l \ge 0$ and $\sum_{l} C_l \le C$.
\end{definition}

\begin{definition}
Consider the closure of the convex hull of achievable rate-fronthaul tuples
$(R_1,\ldots,R_K,C_1,\ldots,C_L)$ using the DDF strategy satisfying
(\ref{eq:DDF_rate}) over all joint distributions
$p(u_1,\ldots,u_K,x_1,\ldots,x_L)$ satisfying possibly input
constraints on $(x_1,\ldots,x_L)$.  Define
$\mathcal{R}^{\textrm{s}}_{\textrm{DDF}}(C)$ to be the projection of the above
set along a sum fronthaul constraint $C$, i.e., the set of rate tuples
$(R_1,\ldots,R_K)$ such that $C_l \ge 0$ and $\sum_{l} C_l \le C$.
\end{definition}

Let us write down the two rate regions
$\mathcal{R}^{\textrm{s}}_{\textrm{COM}}(C)$ and
$\mathcal{R}^{\textrm{s}}_{\textrm{DDF}}(C)$ defined above more explicitly. 
For the compression strategy,
under a fixed joint distribution $p(u_1,\ldots,u_K,x_1,\ldots,x_L)$ and a fixed sum fronthaul constraint $C$, only the constraint for $\mathcal{S} = \mathcal{L}$ is active in \eqref{eq:multivariate_compression}. Therefore, 
the set of $(R_1,\ldots,R_K)$ that satisfies the sum fronthaul constraint $C$
under a fixed joint distribution $p(u_1,\ldots,u_K,x_1,\ldots,x_L)$ is 
described by the constraints
\begin{align}
\sum_{k \in \mathcal{D}} R_k & < \sum_{k \in \mathcal{D}} I(U_k;Y_k) - T(U(\mathcal{D})) \label{eq:COM_sumfronthaul} \\
C & >  I(U(\mathcal{K});X(\mathcal{L})) + T(X(\mathcal{L})) \label{eq:sum_fronthaul_compression}
\end{align}
over all $\mathcal{D} \subseteq \mathcal{K}$. 
The set $\mathcal{R}^{\textrm{s}}_{\textrm{COM}}(C)$ is then the projection of
the closure of the convex hull of these $(R_1,\ldots,R_K,C)$ tuples, where the convex hull
is taken over both the distributions as well as $C$.

Similarly, for the DDF strategy, under a fixed distribution
$p(u_1,\ldots,u_K,x_1,\ldots,x_L)$ and a fixed sum fronthaul constraint $C$, the
active constraints are those corresponding to $\mathcal{S} = \emptyset$ for the case when the sum fronthaul is large enough to accommodate the compression of all BS signals (i.e., $C > I(U(\mathcal{D});X(\mathcal{L})) + T(X(\mathcal{L}))$), which corresponds to the Marton's region, or $\mathcal{S} = \mathcal{L}$ for the case when the sum fronthaul is not large enough to accommodate the compression of all BS signals; see \cite{zhou2016optimal} for a similar result. The set of $(R_1,\ldots,R_K)$ that satisfies
the sum fronthaul constraint $C$ under a fixed joint distribution
$p(u_1,\ldots,u_K,x_1,\ldots,x_L)$ is thus described by the constraints
\begin{align}
\sum_{k \in \mathcal{D}} R_k & < \sum_{k \in \mathcal{D}} I(U_k;Y_k) - T(U(\mathcal{D})) \label{eq:DDF_sumfronthaul_1} \\
\sum_{k \in \mathcal{D}} R_k & < \sum_{k \in \mathcal{D}} I(U_k;Y_k) + C - T(U(\mathcal{D}),X(\mathcal{L})) \label{eq:DDF_sumfronthaul_2} \\
&= \sum_{k \in \mathcal{D}} I(U_k;Y_k) - T(U(\mathcal{D})) \nonumber \\
& \qquad + C - I(U(\mathcal{D});X(\mathcal{L})) - T(X(\mathcal{L})) \label{eq:DDF_sumfronthaul_3}
\end{align}
over all $\mathcal{D} \subseteq \mathcal{K}$. 
The set $\mathcal{R}^{\textrm{s}}_{\textrm{DDF}}(C)$ is then the projection of the closure of the 
the convex hull of the above $(R_1,\ldots,R_K,C)$ tuples, where the convex hull (i.e.,  time-sharing) 
is taken over both the distributions as well as $C$.

\begin{theorem}
	\label{thm:sum_fronthaul}
	For the downlink C-RAN with a general DMC $p(y_1,\ldots,y_K|x_1,\ldots,x_L)$ in the second
hop and a sum fronthaul constraint $C$, we have $\mathcal{R}^{\textrm{s}}_{\textrm{COM}}(C) = \mathcal{R}^{\textrm{s}}_{\textrm{DDF}}(C)$.
\end{theorem}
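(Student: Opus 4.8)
The plan is to prove the two inclusions separately. The inclusion $\mathcal{R}^{\textrm{s}}_{\textrm{COM}}(C) \subseteq \mathcal{R}^{\textrm{s}}_{\textrm{DDF}}(C)$ is the easy one and follows from the general containment already noted after Theorem~\ref{thm:2}. For any distribution $p$ feasible for the compression strategy, i.e. with $I(U(\mathcal{K});X(\mathcal{L})) + T(X(\mathcal{L})) \le C$, the extra DDF constraints \eqref{eq:DDF_sumfronthaul_3} are inactive, since $C - I(U(\mathcal{D});X(\mathcal{L})) - T(X(\mathcal{L})) \ge C - I(U(\mathcal{K});X(\mathcal{L})) - T(X(\mathcal{L})) \ge 0$ for every $\mathcal{D} \subseteq \mathcal{K}$ by monotonicity of mutual information. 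Hence the DDF region at $(p,C)$ contains the Marton region \eqref{eq:COM_sumfronthaul}, which is exactly the compression region at $(p,C)$; since $\mathcal{R}^{\textrm{s}}_{\textrm{DDF}}(C)$ is convex, it contains the convex hull of these regions over all feasible $p$, namely $\mathcal{R}^{\textrm{s}}_{\textrm{COM}}(C)$.

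For the reverse inclusion, since both sets are closed and convex, I would show they have the same support function: for every weight vector $\mu=(\mu_1,\ldots,\mu_K)\ge 0$ the maximum of $\sum_k \mu_k R_k$ over $\mathcal{R}^{\textrm{s}}_{\textrm{DDF}}(C)$ is attained within $\mathcal{R}^{\textrm{s}}_{\textrm{COM}}(C)$. Writing $M(\mathcal{D}) = \sum_{k\in\mathcal{D}} I(U_k;Y_k) - T(U(\mathcal{D}))$ and $J = I(U(\mathcal{K});X(\mathcal{L})) + T(X(\mathcal{L}))$, the crucial single-letter fact I would establish first is that for every $\mathcal{D}$,
\begin{equation}
M(\mathcal{D}) \le I(U(\mathcal{D});X(\mathcal{L})) + T(X(\mathcal{L})),
\end{equation}
and in particular $M(\mathcal{K}) \le J$. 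This follows from the Markov chain $U_k \to X(\mathcal{L}) \to Y_k$ (data processing gives $I(U_k;Y_k)\le I(U_k;X(\mathcal{L}))$) together with the subadditivity $H(U(\mathcal{D})\mid X(\mathcal{L})) \le \sum_{k\in\mathcal{D}} H(U_k\mid X(\mathcal{L}))$. This inequality is what guarantees that trading fronthaul for rate through time-sharing never costs more rate than the DDF penalty $C - J$ in \eqref{eq:DDF_sumfronthaul_3}.

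The construction then proceeds along the rate--fronthaul trade-off. For a fixed DDF distribution $p$ with $J > C$, the basic observation is that time-sharing $p$ (which needs fronthaul $J$) with an inactive ``silence'' mode at fraction $C/J$ yields the scaled Marton region $\{R : \sum_{\mathcal{D}} R_k < (C/J)\,M(\mathcal{D})\}$ at average fronthaul $C$, and the inequality above gives $(C/J)M(\mathcal{K}) \ge M(\mathcal{K}) + C - J$, so the compression strategy already matches the \emph{sum-rate} bound of the DDF region at $p$. For a general direction $\mu$, I would evaluate the DDF weighted optimum from the explicit constraints \eqref{eq:DDF_sumfronthaul_1}--\eqref{eq:DDF_sumfronthaul_3} (under a sum constraint only $\mathcal{S}=\emptyset$ and $\mathcal{S}=\mathcal{L}$ are active) and match it by a direction-dependent mixture of compression operating points, i.e. by the upper concave envelope, taken over all distributions, of the weighted Marton value as a function of the scalar fronthaul budget. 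The scalar nature of the sum constraint is precisely what allows a one-dimensional time-sharing to realize this envelope, and is exactly the property that fails under individual fronthaul constraints.

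I expect the main obstacle to be this last matching step for general $\mu$. Unlike the Marton region, the DDF region at a fixed $p$ is the intersection of two polyhedra whose second rank function $M(\mathcal{D}) + C - I(U(\mathcal{D});X(\mathcal{L})) - T(X(\mathcal{L}))$ need not be submodular, so its support function must be computed by an Edmonds-type min--max rather than a simple greedy, and the matching compression mixture may have to combine $p$ with cheaper auxiliary distributions (coarser-quantization or fewer-user versions) tuned to $\mu$. Verifying that the convex hull of compression points fills the DDF polytope in the directions where the sum-fronthaul constraint is only partially binding is the technical heart of the argument, and it is there that the single-letter inequality and the submodularity of the Marton rank function are used together.
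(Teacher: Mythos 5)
Your proposal has a genuine gap, and it sits exactly where you flag it: the general-$\mu$ matching step is never carried out, and the construction you do give cannot be patched into one. Time-sharing $p$ with silence at fraction $C/J$ scales \emph{every} user's rate by $C/J$, so while it matches the sum-rate face (your computation $(C/J)M(\mathcal{K}) \ge M(\mathcal{K}) + C - J$ is correct), it fails to dominate the other extreme points of the DDF polytope. Under the sum constraint the extreme points have an asymmetric structure: for a greedy ordering there is a critical index $j$ such that users $1,\ldots,j-1$ retain their \emph{full} successive Marton rates $I(U_k;Y_k)-I(U_k;U_{k-1},\ldots,U_1)$, user $j$ gets a fraction $(1-\alpha)$ of its successive rate, and users beyond $j$ get zero; uniform scaling strictly reduces the rates of the first $j-1$ users and so does not reach these points. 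The correct mixture is the asymmetric one you gesture at ("fewer-user versions") but never construct: time-share the original distribution with users $j+1,\ldots,K$ shut off against the same distribution with users $j,\ldots,K$ shut off, which preserves the full rates of users $1,\ldots,j-1$ throughout.

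The second miss is that the obstacle you raise --- that the rank function $\sum_{k\in\mathcal{D}} I(U_k;Y_k) + C - T(U(\mathcal{D}),X(\mathcal{L}))$ need not be submodular, forcing an Edmonds-type min--max --- dissolves under the sum constraint, and it is resolved by the incremental ($\mathcal{K}$ versus $\mathcal{D}$) form of the very single-letter inequality you proved. Since $\sum_{k\in\mathcal{D}} R_k \le \sum_{k\in\mathcal{K}} R_k$, one may add the full-set constraint to every $\mathcal{D}$, and the chain
\begin{equation}
T(U(\mathcal{K}),X(\mathcal{L})) - T(U(\mathcal{D}),X(\mathcal{L})) \le \sum_{k \in \mathcal{D}^c} I(U_k;Y_k),
\end{equation}
proved exactly as in your display via $U_k \to X(\mathcal{L}) \to Y_k$ and subadditivity of conditional entropy (equivalently $\sum_{k\in\mathcal{D}^c} I(U_k;X(\mathcal{L})|Y_k)\ge 0$), shows the $\mathcal{D}=\mathcal{K}$ instance of \eqref{eq:DDF_sumfronthaul_2} dominates all the others. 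Hence at fixed $p$ the DDF region is $\{R : \sum_{k\in\mathcal{D}} R_k \le f(\mathcal{D})\}$ with $f(\mathcal{D}) = \min\{M(\mathcal{D}), \text{const}\}$, and the minimum of the (assumed polymatroidal) Marton rank function with a constant is submodular, so the greedy algorithm enumerates all extreme points and no support-function or concave-envelope machinery is needed. The remaining check --- that the two-point mixture above respects the average sum-fronthaul budget $C$ --- again reduces to $I(U_j;X(\mathcal{L})|Y_j) \ge 0$, i.e.\ the marginal fronthaul cost of $U_j$ exceeds its marginal rate contribution. So your key inequality is the right tool, but you apply it only to the sum-rate direction; applied incrementally it collapses the constraint family and completes the region equality.
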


We briefly explain the main ideas of the proof using an illustrative 2-BS 2-user example. Consider any given channel $p(y_1,y_2|x_1,x_2)$ in the second hop of C-RAN and a sum fronthaul constraint $C$.  The rate region using the
generalized compression strategy is given by
\begin{align}
R_1 & < I(U_1;Y_1) \\
R_2 & < I(U_2;Y_2) \\
R_1 + R_2 & < I(U_1;Y_1) + I(U_2;Y_2) - I(U_1;U_2)
\end{align}
under joint distributions $p(u_1,u_2,x_1,x_2)$ that satisfy
$
I(U_1,U_2;X_1,X_2) + I(X_1;X_2) < C.
$
For the DDF strategy, the rate region under any fixed distribution 
$p(u_1,u_2,x_1,x_2)$ can be expressed as
\begin{align}
R_1 &< I(U_1,Y_1) \nonumber \\
& + \min \left\{ \begin{aligned} & 0, \\
								 & C - I(U_1;X_1,X_2) - I(X_1;X_2)
				 \end{aligned} \right\} \label{eq:sumfronthaulregion_1} \\
R_2 & < I(U_2,Y_2) \nonumber \\
& + \min \left\{ \begin{aligned} & 0, \\
								 & C - I(U_2;X_1,X_2) - I(X_1;X_2)
				 \end{aligned} \right\} \\
R_1 + R_2 & < I(U_1,Y_1) + I(U_2,Y_2) - I(U_1;U_2) \nonumber \\
& + \min \left\{ \begin{aligned} & 0, \\
					             & C - I(U_1,U_2;X_1,X_2) - I(X_1;X_2)
				\end{aligned} \right\}. \label{eq:sumfronthaulregion_2}
\end{align}
To show that the generalized compression and DDF regions coincide (after convex
hull), we start with the DDF region under some fixed distribution 
$p(u_1,u_2,x_1,x_2)$.
If the sum fronthaul capacity is such that $C > I(U_1,U_2;X_1,X_2) +
I(X_1;X_2)$ under a fixed distribution $p(u_1,u_2,x_1,x_2)$, then both rate
regions are exactly the same. The interesting case is when the distribution
$p(u_1,u_2,x_1,x_2)$ is such that 
$
C < I(U_1,U_2;X_1,X_2) + I(X_1;X_2),
$
which is allowed under the DDF strategy but not under the generalized compression strategy. But, we show that by time-sharing across varying $p(u_1,u_2,x_1,x_2)$, (specifically, the original
$p(u_1,u_2,x_1,x_2)$ and one with either of the users shut off), the DDF achievable rate region can nevertheless be achieved using time-sharing of the generalized compression strategies while satisfying an average fronthaul constraint. Intuitively, the penalty that the DDF strategy pays to go beyond the fronthaul capacity is at least as large as the penalty for shutting off the appropriate users.
The proof for the general case of arbitrary number of users and BSs makes use of the polymatroidal structure of the rate region to characterize all the corner points of the rate region achieved by the DDF strategy and constructs appropriate time-shared compression strategies to achieve all such corner points. The full proof is relegated to Appendix \ref{appendix:sum_fronthaul}.

Since the DDF strategy is known to achieve the rate
region of the C-RAN to within a constant gap for the Gaussian network, 
having the generalized compression rate region coincide with the DDF region under the sum fronthaul constraint immediately gives us the following corollary.

\begin{corollary}
\label{cor:sum_fronthaul}
With Gaussian $p(y_1,\ldots,y_K|x_1,\ldots,x_L)$ on the second hop of the C-RAN
model and under a sum fronthaul constraint $C$, the compression strategy
achieves a rate region to within a constant gap to the capacity region. The gap is independent of the channel, the BS power constraints, and the sum fronthaul constraint, and
only depends on the number of BSs and users. 
\end{corollary}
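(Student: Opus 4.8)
The plan is to obtain Corollary \ref{cor:sum_fronthaul} by chaining Theorem \ref{thm:sum_fronthaul} with the constant-gap optimality of the DDF strategy, rather than re-deriving a gap result from scratch. Since Theorem \ref{thm:sum_fronthaul} already establishes $\mathcal{R}^{\textrm{s}}_{\textrm{COM}}(C) = \mathcal{R}^{\textrm{s}}_{\textrm{DDF}}(C)$ for every sum fronthaul budget $C$, it suffices to show that the DDF sum-fronthaul region $\mathcal{R}^{\textrm{s}}_{\textrm{DDF}}(C)$ lies within a constant gap of the sum-fronthaul capacity region, which I denote $\mathcal{C}^{\textrm{s}}(C)$. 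The constant-gap guarantee for the compression strategy then follows immediately by substituting the equality of the two regions.

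First I would recall from Corollary \ref{cor:individual_fronthaul}, i.e., the specialization of \cite{Lim2017DistributedDecodeForward, GangulyKim2017} to the Gaussian C-RAN, that under any fixed individual allocation $(C_1,\ldots,C_L)$ the DDF region $\mathcal{R}_{\textrm{DDF}}(C_1,\ldots,C_L)$ contains the individual-fronthaul capacity region $\mathcal{C}(C_1,\ldots,C_L)$ shrunk inward by a gap $g$ in every rate coordinate, where $g$ depends only on $K$ and $L$ and is uniform over all channel matrices, power constraints, and fronthaul values. The feature to exploit is precisely this uniformity: the gap does not deteriorate as the allocation $(C_1,\ldots,C_L)$ is varied. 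Together with $\mathcal{R}_{\textrm{DDF}}(C_1,\ldots,C_L) \subseteq \mathcal{C}(C_1,\ldots,C_L)$, which holds because the DDF rates are achievable, this sandwiches each individual-fronthaul capacity region between an achievable DDF region and its inflation by $g$.

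Next I would lift this per-allocation guarantee to the sum constraint by a union-over-allocations argument. Writing $\mathcal{C}^{\textrm{s}}(C) = \mathrm{conv}\,\overline{\bigcup_{\sum_l C_l \le C} \mathcal{C}(C_1,\ldots,C_L)}$ and the analogous expression for $\mathcal{R}^{\textrm{s}}_{\textrm{DDF}}(C)$, the fact that the inclusion holds with the same $g$ for every admissible allocation means that taking the union preserves the gap: every tuple in $\mathcal{C}^{\textrm{s}}(C)$ is within $g$ per coordinate of a tuple in $\mathcal{R}^{\textrm{s}}_{\textrm{DDF}}(C)$, while $\mathcal{R}^{\textrm{s}}_{\textrm{DDF}}(C) \subseteq \mathcal{C}^{\textrm{s}}(C)$. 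Since a uniform inward shift commutes with both closure and convex hull, neither the time-sharing operation nor the projection inflates $g$. Combining with Theorem \ref{thm:sum_fronthaul} gives $\mathcal{C}^{\textrm{s}}(C) \subseteq \mathcal{R}^{\textrm{s}}_{\textrm{COM}}(C) + g$ alongside $\mathcal{R}^{\textrm{s}}_{\textrm{COM}}(C) \subseteq \mathcal{C}^{\textrm{s}}(C)$, which is exactly the asserted statement.

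I expect the main obstacle to be the bookkeeping in the union step, namely verifying that the constant gap is genuinely insensitive to how the sum budget $C$ is split across the BSs and that neither the closure nor the convex-hull operation in the definitions of $\mathcal{R}^{\textrm{s}}_{\textrm{DDF}}(C)$ and $\mathcal{C}^{\textrm{s}}(C)$ enlarges it. In particular, one must confirm that the cut-set outer bound used to certify the DDF gap, once projected onto the sum fronthaul constraint, still sandwiches $\mathcal{C}^{\textrm{s}}(C)$, so that the same $g$ controls the distance to the true capacity region and not merely to the outer bound.
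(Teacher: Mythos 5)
Your proposal is correct and takes essentially the same route as the paper, which obtains Corollary \ref{cor:sum_fronthaul} immediately by chaining Theorem \ref{thm:sum_fronthaul} (the identity $\mathcal{R}^{\textrm{s}}_{\textrm{COM}}(C)=\mathcal{R}^{\textrm{s}}_{\textrm{DDF}}(C)$) with the constant-gap optimality of DDF for the Gaussian C-RAN (Corollary \ref{cor:individual_fronthaul}). Your union-over-allocations bookkeeping simply makes explicit what the paper leaves implicit---that a gap $g$ uniform in $(C_1,\ldots,C_L)$, the channel, and the powers survives the union over allocations with $\sum_l C_l \le C$, and that translation by $g$ per coordinate commutes with closure and convex hull---so it is a sound elaboration rather than a different method.
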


As a remark, we wonder whether the generalized compression and DDF rate regions coincide not just under the sum fronthaul constraint, but also individual fronthaul constraints. While the answer to this question is not yet clear, we note here that the successive coding strategy of computing Marton's codewords $(U_1,\ldots,U_K)$ first, then forming the compression codewords $(X_1,\ldots,X_L)$ is not the only way to perform
successive encoding.  There is also the possibility of breaking the encoding
into more than two steps. As an example, consider a 2-BS 2-user C-RAN. The
compression encoding order that we consider in this paper encodes $(U_1,U_2)$
jointly first, and then $(X_1,X_2)$ is computed. But it is possible to encode
$(U_1,X_1)$ first, and then encode $(U_2,X_2)$. Such a re-ordering can potentially help
user 2 because knowing the exact signals to be transmitted to user 1 can
benefit the search for $U_2$ to align its correlation with $U_1$ to
appropriately cancel the interference at user 2. Thus, interleaving in the
encoding of $U$'s and $X$'s is likely needed in order to achieve the same rate
region as DDF under arbitrary fronthaul constraints.  
However, as shown in the next section, if we only consider the sum rate, the
two-step encoding of the generalized compression strategy indeed achieves the
sum capacity of C-RAN to within a constant gap, even under individual fronthaul
constraints, if we assume a Gaussian channel $p(y_1,\ldots,y_K|x_1,\ldots,x_L)$
and use a Gaussian $p(u_1,\ldots,u_K,x_1,\ldots,x_L)$ in the encoding.

It is worth pointing out that similar results exist for the
uplink C-RAN. In the uplink, by comparing the joint decoding of quantized BS
signals and user messages using noisy network coding versus the successive
decoding of quantized signals followed by decoding of user messages,
it is possible to establish that the successive decoding of
quantized signals and user messages achieves the same sum rate as the noisy
network coding strategy (see \cite{zhou2016optimal}, and also \cite{aguerri2017capacity} 
under an ``oblivious'' assumption), while successive decoding (that allows for interleaving within
successive quantized signal decoding and user message decoding) achieves the
same rate region as noisy network coding under a sum fronthaul constraint
\cite{zhou2016optimal}. Just as in the downlink, it
is still an open question as to whether successive decoding can match the noisy
network coding rate region under arbitrary individual fronthaul constraints by considering all possible interleaving combinations across quantization and user messages, In fact, there is a duality between uplink and downlink C-RAN. It can be shown under the assumption of independent compression that the uplink and downlink compression strategies achieve exactly the same rate region for the 
C-RAN model \cite{liu2016uplink}; a similar result is expected to hold 
under the multivariate compression. This suggests an even
stronger connection between the generalized compression strategies in the
uplink and the downlink C-RAN.

\subsection{Sum Rate Under Individual Fronthaul Constraints}
\label{sec:Gaussian}

In this section, we consider the general case of individual fronthaul constraints instead of
restricting to the sum fronthaul constraint as in the previous section.  However, we focus
on the sum rate only, and aim to find the approximate sum capacity of C-RAN
under arbitrary fronthaul constraints.  The main result of this section is that
under a Gaussian C-RAN model, the generalized compression strategy can achieve
a sum rate which is within a constant gap to the cut-set bound of C-RAN under
individual fronthaul constraints. 
More precisely, consider the Gaussian C-RAN model specified in
\eqref{eq:Gaussian_distribution}. For the Gaussian channel, recall that if we set 
the distribution $p(u_1,\dots,u_K,x_1,\dots,x_L)$ according to
\eqref{eq:constant_gap_distribution}, the DDF strategy can be shown to 
achieve to within a constant gap to the capacity region of the Gaussian C-RAN. 
For convenience, we call the distribution in 
\eqref{eq:constant_gap_distribution} the \emph{constant-gap distribution}. 
We show in this section that the sum rate achieved by the DDF strategy for the
Gaussian C-RAN under the constant-gap distribution can also be achieved using
the generalized compression strategy under the same set of fronthaul constraints. 

For each fixed distribution, we can write down the achievable sum rate of the
DDF and the generalized compression strategies explicitly.  
The sum rate achieved by the DDF strategy is given by $R$ that satisfies
\begin{equation}
R < \sum_{k \in \mathcal{K}} I(U_k;Y_k) + \sum_{l \in \mathcal{S}} C_l - T(U(\mathcal{K}),X(\mathcal{S})), \label{eq:sumrate_DDF_1}
\end{equation}
for all $\mathcal{S} \subseteq \mathcal{L}$ under some distribution $p(u_1,\dots,u_K,x_1,\dots,x_L)$.
The sum rate achieved by the generalized compression strategy is given by
$R$ that satisfies
\begin{align}
R & < \sum_{k \in \mathcal{K}} I(U_k;Y_k) - T(U(\mathcal{K})) \label{eq:sumrate_COM_1}\\
\sum_{l \in \mathcal{S}} C_l  & >  I(U(\mathcal{K});X(\mathcal{S})) - T(X(\mathcal{S})) \label{eq:sumrate_COM_2},
\end{align}
for all $\mathcal{S} \subseteq \mathcal{L}$ under some joint distribution $p(u_1,\dots,u_K,x_1,\dots,x_L)$.

\begin{definition}
\label{def:DDF_sumrate}
Consider the closure of the convex hull of achievable sum-rate-fronthaul tuples 
$(R,C_1,\ldots,C_L)$ for the C-RAN with the Gaussian channel model
(\ref{eq:Gaussian_distribution}) using the DDF strategy as expressed in
(\ref{eq:sumrate_DDF_1}) under the constant-gap distribution
(\ref{eq:constant_gap_distribution}) with the BS powers constrained by the power constraint $P$.
Define $R^\text{g}_{\text{DDF}}$ to be the maximum sum rate 
under individual fronthaul constraints $(C_1,\ldots,C_L)$ in this set. 
\end{definition}

\begin{definition}
Consider the closure of the convex hull of achievable sum-rate-fronthaul tuples
$(R,C_1,\ldots,C_L)$ of the C-RAN with the Gaussian channel model
(\ref{eq:Gaussian_distribution}) using the generalized compression strategy 
(\ref{eq:sumrate_COM_1})-(\ref{eq:sumrate_COM_2}) under the constant 
under the constant-gap distribution (\ref{eq:constant_gap_distribution}) with the BS powers constrained by the power constraint $P$.
Define $R^\text{g}_{\text{COM}}$ to be the maximum sum rate
under individual fronthaul constraints $(C_1,\ldots,C_L)$ in this set.  
\end{definition}

Comparing the sum rate of DDF in (\ref{eq:sumrate_DDF_1}) with the sum rate of
generalized compression in (\ref{eq:sumrate_COM_1})-(\ref{eq:sumrate_COM_2}), we
clearly have $R^\text{g}_{\text{COM}} \le R^\text{g}_{\text{DDF}}$. 
We show in this section that actually
$R^\text{g}_{\text{COM}} = R^\text{g}_{\text{DDF}}$. 
As a consequence, we have the following main theorem of this section. 

\begin{theorem}
\label{thm:sum_rate}
For the downlink C-RAN with a memoryless Gaussian channel on the second hop, the compression
scheme achieves a sum rate to within a constant gap to the cut-set bound under
individual fronthaul constraints $(C_1,\dots,C_L)$. The gap is independent
of the channel parameters, the BS power constraints, and the individual fronthaul constraints, and only depends on
the number of BSs and users.
\end{theorem}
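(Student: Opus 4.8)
The plan is to reduce Theorem~\ref{thm:sum_rate} to the equality $R^{\text{g}}_{\text{COM}} = R^{\text{g}}_{\text{DDF}}$ between the two sum rates defined above, and then invoke Corollary~\ref{cor:individual_fronthaul}: since the DDF strategy under the constant-gap distribution \eqref{eq:constant_gap_distribution} attains $R^{\text{g}}_{\text{DDF}}$ within a constant gap of the cut-set bound, the equality transfers the same guarantee to the compression strategy. The first step is to simplify both sum-rate expressions under \eqref{eq:constant_gap_distribution}. Because $X_1,\dots,X_L$ are chosen mutually independent there, $T(X(\mathcal{S})) = 0$ for every $\mathcal{S}\subseteq\mathcal{L}$. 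Using the chain rule $T(U(\mathcal{K}),X(\mathcal{S})) = T(U(\mathcal{K})) + I(U(\mathcal{K});X(\mathcal{S})) + T(X(\mathcal{S}))$, the DDF sum rate \eqref{eq:sumrate_DDF_1} collapses to
\begin{equation}
R^{\text{DDF}}(\mathbf{p}) = M(\mathbf{p}) - \max_{\mathcal{S}\subseteq\mathcal{L}} \bigl[\, I(U(\mathcal{K});X(\mathcal{S})) - \textstyle\sum_{l\in\mathcal{S}} C_l \,\bigr]^{+},
\end{equation}
where $M(\mathbf{p}) \coloneqq \sum_{k} I(U_k;Y_k) - T(U(\mathcal{K}))$ is exactly the compression sum rate \eqref{eq:sumrate_COM_1}, and $\mathbf{p} = (p_1,\dots,p_L)$ is the per-BS power allocation, the only remaining degree of freedom in \eqref{eq:constant_gap_distribution}. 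A power allocation is compression-feasible precisely when the bracketed penalty vanishes, so $R^{\text{DDF}}(\mathbf{p}) = M(\mathbf{p})$ on the feasible set; maximizing over $\mathbf{p}$ then gives the easy inclusion $R^{\text{g}}_{\text{COM}} \le R^{\text{g}}_{\text{DDF}}$ directly.

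For the reverse inequality I would argue constructively. Fix the DDF-optimal allocation $\mathbf{p}^\star$ and let $\mathcal{S}^\star$ be the cut attaining the penalty, so that the penalty equals $I(U(\mathcal{K});X(\mathcal{S}^\star)) - \sum_{l\in\mathcal{S}^\star} C_l$. The key lemma I would prove is a matching bound: \emph{reducing the transmit powers of the BSs in any subset $\mathcal{S}$ lowers the compression sum rate $M$ by at most the compression cost $I(U(\mathcal{K});X(\mathcal{S}))$ of that subset.} This is clean for the constant-gap distribution, because $M(\mathbf{p}) = H(U(\mathcal{K})) - \sum_k H(U_k\mid Y_k)$ while $I(U(\mathcal{K});X(\mathcal{S})) = H(U(\mathcal{K})) - H(U(\mathcal{K})\mid X(\mathcal{S}))$, and $H(U(\mathcal{K})\mid X(\mathcal{S}))$ does not depend on the powers of the BSs in $\mathcal{S}$, whereas each $H(U_k\mid Y_k)$ is monotonically nondecreasing in those powers. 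Differentiating along a reduction of the $\mathcal{S}^\star$-powers thus yields $\lvert dM\rvert \le \lvert d\,I(U(\mathcal{K});X(\mathcal{S}^\star))\rvert$, so driving $I(U(\mathcal{K});X(\mathcal{S}^\star))$ down to $\sum_{l\in\mathcal{S}^\star} C_l$ (restoring feasibility on this cut) costs at most the penalty in sum rate. Equivalently, one may time-share the full allocation with the allocation obtained by switching off the $\mathcal{S}^\star$-BSs; choosing the time-sharing fraction $\alpha = \bigl(\sum_{l\in\mathcal{S}^\star} C_l\bigr)/I(U(\mathcal{K});X(\mathcal{S}^\star))$ makes the averaged cut-$\mathcal{S}^\star$ fronthaul exactly meet its budget, while the same lemma guarantees the averaged sum rate is at least $R^{\text{DDF}}(\mathbf{p}^\star)$.

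The hard part, and the main obstacle, is that restoring feasibility on the binding cut $\mathcal{S}^\star$ need not preserve feasibility on the others: lowering the power of a BS $l\in\mathcal{S}^\star$ relaxes every cut containing $l$ but can \emph{tighten} a cut $\mathcal{T}\not\ni l$, since reducing the out-of-cut signal makes $U(\mathcal{K})$ more informative about $X(\mathcal{T})$ and hence raises $I(U(\mathcal{K});X(\mathcal{T}))$. To control all $2^L$ constraints simultaneously I would exploit the structure that $\mathcal{S}\mapsto I(U(\mathcal{K});X(\mathcal{S}))$ is monotone submodular while $\mathcal{S}\mapsto\sum_{l\in\mathcal{S}}C_l$ is modular, so the compression-feasible set of fronthaul tuples is a contra-polymatroid, as already noted after \eqref{eq:multivariate_compression}. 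This lets me argue that the binding cuts at any allocation are closed under union and intersection and hence form a lattice with a well-defined maximal element, along which a greedy reduction that always lowers the power of a BS in the current worst-violated cut will terminate; submodularity then forces the accumulated rate loss to telescope to at most the original DDF penalty.

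An equivalent and perhaps cleaner route is a Lagrangian-duality argument on the convex (after time-sharing) sum-rate region: the DDF penalty $\max_{\mathcal{S}}[\,I(U(\mathcal{K});X(\mathcal{S})) - \sum_{l\in\mathcal{S}}C_l\,]^{+}$ is precisely the value of the compression constraints \eqref{eq:sumrate_COM_2} relaxed with a single-cut multiplier, and the contra-polymatroidal structure forces the optimal dual solution to concentrate its mass on one cut, so that strong duality delivers $R^{\text{g}}_{\text{COM}} = R^{\text{g}}_{\text{DDF}}$. Either way, the crux is the simultaneous multi-cut bookkeeping; the single-cut rate-versus-fronthaul exchange is the easy ingredient once the key lemma is in hand. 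Combining the equality with Corollary~\ref{cor:individual_fronthaul} then yields the stated constant gap to the cut-set bound, which depends only on the number of BSs $L$ and users $K$.
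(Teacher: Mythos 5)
Your setup is sound: the reduction to $R^{\text{g}}_{\text{COM}} = R^{\text{g}}_{\text{DDF}}$ plus Corollary~\ref{cor:individual_fronthaul} is exactly how the paper frames Theorem~\ref{thm:sum_rate}, the collapsed penalty form of the DDF sum rate under \eqref{eq:constant_gap_distribution} is correct (the total-correlation chain rule and $T(X(\mathcal{S}))=0$ both check out), and your single-cut exchange lemma is true --- it is essentially the paper's inequality chain showing $I(U'_k;X'_j,\ldots,X'_L|Y'_k) \le I(U_k;X(\mathcal{L})|Y_k)$, and in identity form it is Lemma~\ref{lemma:equi_sumrate}. But the proof has a genuine gap precisely at the point you yourself flag as ``the hard part'': the simultaneous control of all $2^L$ cut constraints is asserted, not proved. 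Your greedy scheme (repeatedly fix the worst-violated cut) bounds the rate loss of each iteration by the violation it repairs, so a priori the accumulated loss is bounded by a \emph{sum} of violations across iterations, whereas DDF pays only the single maximal violation; nothing in your sketch shows the claimed telescoping, and since (as you correctly observe) reducing powers in $\mathcal{S}^\star$ can \emph{increase} $I(U(\mathcal{K});X(\mathcal{T}))$ on disjoint cuts $\mathcal{T}$, new violations can appear mid-process. The Lagrangian route has the same status: the claim that the optimal dual measure concentrates on a single cut is exactly the unproven crux restated. There is also a structural error in the machinery you invoke: with independent $X$'s the function $\mathcal{S}\mapsto I(U(\mathcal{K});X(\mathcal{S}))$ is \emph{supermodular}, not submodular (e.g., for $U = X_1+X_2+Z$ with independent Gaussian inputs, $I(U;X_1|X_2) \ge I(U;X_1)$), and it is supermodularity of $g$ in \eqref{eq:g_definition} that makes the feasible fronthaul set a contra-polymatroid; a submodular lower-bound system would not be one, so the lattice/greedy argument as you state it rests on the wrong polarity.

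The paper sidesteps the multi-cut bookkeeping entirely by working at the other end of the problem: it fixes the sum rate $R$, forms the contra-polymatroid $\overline{\mathcal{P}}(R)$ of fronthaul tuples, and enumerates its extreme points via greedy orderings of the BSs. At such a corner the tight constraints are \emph{nested} --- $C_l = 0$ for $l<j$, $C_l = I(X_l;U(\mathcal{K})|X_{l-1},\ldots,X_1)$ for $l>j$, and a fractional value $(1-\beta)I(X_j;U(\mathcal{K})|X_{j-1},\ldots,X_1)$ at $l=j$ --- so the time-sharing switches off the nested prefixes $\{1,\ldots,j-1\}$ and $\{1,\ldots,j\}$ of BSs (dictated by the ordering, not by the binding cut $\mathcal{S}^\star$). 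Because, under the constant-gap Gaussian distribution with independent $X$'s, switching a BS off is equivalent to conditioning on its input, the per-BS fronthaul usage of the time-shared compression scheme matches the corner point \emph{exactly}, $\bar{C}_l = C_l$ for every $l$, so all cut constraints hold at once with no iteration; sum-rate preservation then follows from Lemma~\ref{lemma:equi_sumrate} together with the Gaussian monotonicity above, i.e., your single-cut lemma. To repair your proposal you would either need to adopt this extreme-point/nested-shut-off construction, or genuinely prove your telescoping claim --- and the latter is where all the remaining difficulty lives.
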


%{The result above holds in general for an arbitrary number of users and BSs}. 
We briefly explain the key ideas of the proof again using the illustrative 2-BS 2-user example. Under a fixed distribution $p(u_1,u_2,x_1,x_2)$, the sum rate achieved by the
DDF strategy is given by
\begin{align}
\label{eq:sumrate_ddfregion}
R & < I(U_1;Y_1) + I(U_2;Y_2) - I(U_1;U_2) \nonumber \\
& + \min \left\{ \begin{aligned} & 0, \\
									& C_1 - I(U_1,U_2;X_1), \\
									& C_2 - I(U_1,U_2;X_2), \\
									& C_1 + C_2 - I(U_1,U_2;X_1,X_2) - I(X_1;X_2)
\end{aligned} \right\}.
\end{align}
Likewise, for the generalized compression strategy, the sum rate is given by
\begin{align}
\label{eq:sumrate_comregion}
R & < I(U_1;Y_1) + I(U_2;Y_2) - I(U_1;U_2) \\
C_1 & >  I(U_1,U_2;X_1) \label{eq:COM_2BS_1} \\
C_2 & >  I(U_1,U_2;X_2) \label{eq:COM_2BS_2} \\
C_1 + C_2 & >  I(U_1,U_2;X_1,X_2) + I(X_1;X_2).  \label{eq:COM_2BS_3}
\end{align}
Clearly, if the fronthaul capacity constraints $C_1$ and $C_2$ are such that
under the distribution $p(u_1,u_2,x_1,x_2)$, the fronthaul constraints
(\ref{eq:COM_2BS_1})-(\ref{eq:COM_2BS_3}) for the compression strategy are all
satisfied, then the sum rate for the compression strategy is exactly equal to that of the DDF strategy. 
However, if either $C_1$ or $C_2$ or both are not large enough so that
some of the fronthaul constraints are violated, then the DDF strategy can still
provide an achievable rate-tuple, but the sum rate would be smaller than
$I(U_1;Y_1) + I(U_2;Y_2) - I(U_1;U_2)$ by a penalty term equal to how large the
maximum violation in the three fronthaul constraints is. For the compression strategy, however, whenever the fronthaul constraints are not
satisfied, we can no longer use the distribution $p(u_1,u_2,x_1,x_2)$ directly.
The idea of the proof is that we can modify the distribution (specifically, by time-sharing between the original $p(u_1,u_2,x_1,x_2)$ and that with one of the users turned off), so
that under the new distribution, we stay within the allowed fronthaul
constraint and achieve a sum rate $I(U_1;Y_1) + I(U_2;Y_2) - I(U_1;U_2)$ that
is at least as large as the penalized sum rate of the DDF strategy.
The proof for the general case of arbitrary number of users
and BSs uses the contra-polymatroidal structure of the fronthaul region to characterize the corner points for each fixed sum rate $R$ under the DDF strategy.
Using appropriate time-sharing schemes in the generalized compression strategy, we show that each such corner point is achievable
in the compression strategy with a sum rate at least as large as $R$. The complete proof is relegated to Appendix~\ref{appendix:sum_rate}.  

As a final remark, we mention the work of \cite{GangulyKim2017}, which shows
that the gap between the achievable rate region and the cut-set bound for DDF
can be refined, so that the gap is logarithmic in the number of BSs and users,
instead of being linear as in \cite{Lim2017DistributedDecodeForward}. The
refinement uses a slightly modified form of the constant-gap distribution. 
The equivalence result shown in this section also works for this modified
constant-gap distribution. Thus, a similar refinement can be used to conclude that the compression
strategy can achieve the sum capacity of the C-RAN network to within a constant gap which
is logarithmic in the number of BSs and users. A different improvement in the gap for the DDF strategy is proved in \cite{wang2018}, and is also applicable to our result.

\subsection{Sum Rate Under Sum Fronthaul Constraint}

The previous two sections show that even though the DDF strategy allows
for distributions $p(u_1,\ldots,u_K,x_1,\ldots,x_L)$ that can compress beyond
the fronthaul constraints, under certain conditions, the compression strategy,
which compresses within the fronthaul constraints, can achieve the same
rate-region or the same sum rate if we allow time-sharing between different
achievable rate tuples of the compression strategy. Applying this result to the
Gaussian C-RAN gives us the conclusion that time-sharing of compression
strategies can achieve the sum capacity of Gaussian C-RAN to within a constant gap. 
This section provides a slightly stronger statement. We show that for maximizing
the \emph{sum} rate under the \emph{sum} fronthaul constraint, there exists a
Gaussian compression strategy that achieves to within a constant gap of the
cut-set bound even \emph{without} time-sharing. 

Let us first write down the achievable sum rates for the DDF and compression
strategies under a sum fronthaul constraint $C$. 
From \eqref{eq:DDF_sumfronthaul_1} and \eqref{eq:DDF_sumfronthaul_3}, we have that
the achievable sum rate $R^s_{\text{DDF}}$ for the DDF strategy under the sum
fronthaul constraint is
\begin{align}
	R^s_{\text{DDF}} & < \sum_{k \in \mathcal{K}} I(U_k;Y_k) - T(U(\mathcal{K})) \nonumber \\
	& \quad + \min \left\{ \begin{aligned} & 0, \\
													& C - I(U(\mathcal{K});X(\mathcal{L})) - T(X(\mathcal{L}))
									\end{aligned} \right\},
	\label{eq:sumrate_sumfronthaul_DDF}
\end{align}
for some distribution $p(u_1,\ldots,u_K,x_1,\ldots,x_L)$.
Similarly, from \eqref{eq:COM_sumfronthaul}, the achievable sum rate $R^s_{\text{COM}}$ using the compression strategy under the sum fronthaul constraint is given by
\begin{equation}
	R^s_{\text{COM}} < \sum_{k \in \mathcal{K}} I(U_k;Y_k) - T(U(\mathcal{K})),
\end{equation}
for some distribution  $p(u_1,\ldots,u_K,x_1,\ldots,x_L)$ that satisfies
\begin{equation}
	C > I(U(\mathcal{K});X(\mathcal{L})) + T(X(\mathcal{L})) \label{eq:sumrate_sumfronthaul_compression}.
\end{equation}
Consider the channel model \eqref{eq:Gaussian_distribution}. We know that
the DDF strategy can achieve to within a constant gap to the capacity region (and
hence the sum capacity) of this Gaussian C-RAN model under individual fronthaul
constraints (and hence also the sum fronthaul constraint) by using the distribution
given by \eqref{eq:constant_gap_distribution}. We now show that by using a
(possibly) modified version of this distribution, the compression strategy can
achieve the same sum rate under the sum fronthaul constraint.

We consider two cases. If under the distribution given in \eqref{eq:constant_gap_distribution}, we have
\begin{align}
\label{eq:sum_rate_sum_fronthaul_case2}
	C &\ge I(U(\mathcal{K});X(\mathcal{L})) + T(X(\mathcal{L})) \\
	  &= \frac{1}{2} \log | \mathbf{I} + P \mathbf{H} \mathbf{H}^T|,
\end{align}
then we can simply use the same distribution in the compression strategy to
achieve the same rate. 
If $C < \frac{1}{2} \log | \mathbf{I} +  P \mathbf{H} \mathbf{H}^T|$, we propose to modify the distribution in
\eqref{eq:constant_gap_distribution} in such a way that when used in the
compression strategy, the fronthaul constraint is satisfied, and further, it
achieves a higher sum rate than the DDF strategy. The proposed modification is to reduce 
the power of $X$'s by a factor $\gamma < 1$. We find $\gamma$ such that
\begin{equation}
	C = \frac{1}{2} \log | \mathbf{I} +  \gamma P \mathbf{H} \mathbf{H}^T|.
\end{equation}
This allows us to compress with the same sum fronthaul rate as the DDF strategy. 
To show that compression with this modified distribution actually improves upon
the DDF sum rate, we compare the sum rate achieved by the compression
strategy to that with DDF under the modified distribution as follows:
\begin{align}
R^s_\text{COM} &= \sum_{k \in \mathcal{K}} I(U'_k;Y'_k) - T(U'(\mathcal{K})) \label{eq:sumrate_com_start}\\
&\overset{(a)}{=} I(U'(\mathcal{K});X'(\mathcal{L})) - \sum_{k \in \mathcal{K}} I(U'_k;X'(\mathcal{L})|Y'_k) \\	
&= C - \sum_{k \in \mathcal{K}} \frac{1}{2}\log\left(1 + \frac{\sum_{l=1}^L h_{k,l}^2 \gamma P}{\sum_{l=1}^L h_{k,l}^2 \gamma P + \sigma^2}\right) \\
&> C - \sum_{k \in \mathcal{K}} \frac{1}{2}\log\left(1 + \frac{\sum_{l=1}^L h_{k,l}^2 P}{\sum_{l=1}^L h_{k,l}^2 P + \sigma^2}\right) \\
&\overset{(b)}{\ge} I(U(\mathcal{K});X(\mathcal{L})) - \sum_{k \in \mathcal{K}} I(U_k;X(\mathcal{L})|Y_k) \\ 
&\overset{(c)}{=} R^s_\text{DDF} \label{eq:sumrate_ddf_finish}
\end{align}
where $(a)$ and $(c)$ are due to Lemma \ref{lemma:equi_sumrate} in Appendiex
\ref{appendix:sum_rate}, and $(b)$ is due to (\ref{eq:sum_rate_sum_fronthaul_case2}). 
Since the DDF sum rate achieves to within a constant gap to the sum capacity of
C-RAN, the above shows that this choice of (non-time-shared) distribution 
also achieves to within a constant gap to the sum capacity.

\subsection{Numerical Example}

We provide a numerical example that illustrates a performance
comparison between the compression and DDF strategies, along with the more
traditional beamforming based strategy that takes fronthaul into account.

Consider the Gaussian channel \eqref{eq:Gaussian_distribution} with 2 BSs and 2 users. As a baseline, we consider beamforming followed by compression, where we use the zero-forcing beamformers in the directions of $\mathbf{W} = \mathbf{H}^{-1}$. We then allocate powers across the two normalized beams $\mathbf{w}_1$ and $\mathbf{w}_2$, and also find the quantization noise levels
to maximize the sum rate by exhaustive search, while satisfying the sum
fronthaul constraint. For the DDF strategy, the sum rate is calculated as given
in \eqref{eq:sumrate_sumfronthaul_DDF}. For the generalized compression scheme,
we construct the explicit distribution that achieves the sum capacity to within a constant gap as explained in the previous section. The parameter $\gamma$ is found using a line search between $[0,1]$. 

\begin{figure}
	\centering
	\includegraphics[width=\columnwidth]{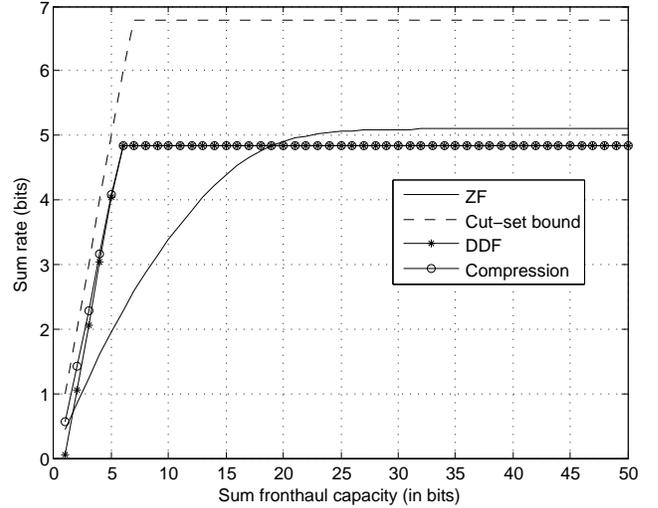}
	\caption{Sum rate comparison of the zero-forcing, DDF, and compression strategies, along with the cut-set outer bound. Note that the DDF and generalized compression strategies are evaluated under their respective distributions that achieve to within a constant gap to capacity for each strategy. In particular, they are not evaluated under their respective optimal distributions, so the numerical result does not imply that the compression strategy outperforms DDF.
	} 
	\label{fig:sumrate_comparison}
\end{figure}

Fig.~\ref{fig:sumrate_comparison} shows the sum rate achieved using these
strategies as a function of the sum fronthaul capacity available for a fixed
real-valued channel (generated at random according to a Rayleigh fading
distribution).  Individual BS power constraints of $P=100$ and background
noise $\sigma^2 = 1$ are assumed. For comparison, we also plot the cut-set
bound. The figure shows that the compression strategy performs nearly the same
as the DDF strategy for most of the sum fronthaul capacity range. It performs
slightly better than the DDF strategy at very low sum fronthaul capacities
because of the improvement in the gap as a result of choosing $\gamma < 1$ to
accommodate the fronthaul capacity as shown from \eqref{eq:sumrate_com_start} to
\eqref{eq:sumrate_ddf_finish}. We note that both the compression and the DDF
strategies are within a constant gap to the cut-set bound. When $C <
\frac{1}{2} \log | \mathbf{I} + P \mathbf{H}  \mathbf{H}^T|$, the
gap is at most 1 bit, while for higher values of $C$, the gap is at most 2 bits. 
As compared to the zero-forcing strategy, we observe that the generalized
compression strategy performs much better at lower sum fronthaul capacities,
because the zero-forcing beam direction does not account for the quantization
noise. At higher sum fronthaul capacities, all three strategies saturate. However, the zero-forcing strategy may achieve a higher sum rate than the compression and DDF strategies, because the
latter does not explicitly null interference, but only aims to provide a
universal strategy that approximately achieves the cut-set bound for all values of the sum fronthaul constraint. The choice of $\mathbf{U} = \mathbf{H} \mathbf{X} +
\tilde{\mathbf{Z}}$ for the channel $\mathbf{Y} = \mathbf{H} \mathbf{X} +
{\mathbf{Z}}$ is in a sense trying to invert the channel, but not exactly. 

Even though the generalized compression strategy achieves to within a constant gap under certain conditions, it is important to note that the gap depends on the number of users and BSs, at least logarithmically. Therefore, the data-sharing strategies might perform better than the generalized compression strategy, especially in the low-power regime or when the channel matrix is ill-conditioned; see \cite{wang2018} and \cite{patil2018hybrid} for some numerical evidence along these lines.

\section{Conclusion}
\label{sec:Conclusion}

This paper investigates the compression strategy for the downlink of a C-RAN
from an information theoretic point of view. The paper first generalizes the
existing compression strategies to include Marton's multicoding followed by
multivariate compression, then analyzes the resulting rate region for a C-RAN
with a general DMC between the BSs and the users. When
compared with the DDF strategy specialized to the downlink C-RAN, it is pointed
out that DDF is a generalization of the compression strategy where the Marton's
multicoding and the multivariate compression are done jointly as opposed to successively in the compression strategy. The paper then shows that under
a sum fronthaul constraint, such generalization does not lead to higher rates
and the rate regions of the two strategies coincide. Thus, for the Gaussian C-RAN under a sum fronthaul constraint, the compression strategy already achieves the capacity region to within a constant gap. Furthermore, for the Gaussian C-RAN under individual fronthaul constraints, the paper shows that the two-phase compression strategy can achieve a sum rate that is within a constant gap to the cut-set bound. These results provide a justification for the practical choice of the two-phase compression strategy for the downlink C-RAN.

\begin{appendices}

\section{Proof of Theorem \ref{thm:1}}
\label{appendix:proof_compression}

	We provide a proof sketch by first describing the coding scheme, then
	establishing the conditions on the achievable rates for vanishing probability 
	of error $P_e^{(n)}$.
	
	Fix the distribution $p(u_1,\ldots,u_K,x_1,\ldots,x_L)$. Let $\epsilon > 0$.
	\begin{enumerate}
		\item Codebook generation:
		We generate a random codebook for Marton's multicoding
		according to $u_k^n(m_k,l_k) \sim \prod_{i=1}^{n} p_{U_k}(u_{ki})$ for
		$(m_k,l_k) \in [1:2^{nR_k}] \times [1:2^{n\tilde{R}_k}]$, $k \in \mathcal{K}$. 
		Similarly, we generate a random codebook for multivariate compression according
		to $x_l^n(t_l) \sim \prod_{i=1}^n p_{X_l}(x_{li})$ for $t_l \in [1:2^{nC_l}]$,
		$l \in \mathcal{L}$.
		\item Encoding at the CP:
		To send $[m_1:m_K]$, we find $[l_1:l_K]$ such that $[u_1^n(m_1,l_1):u_K^n(m_K,l_K)] \in \mathcal{T}_\epsilon^{(n)}$. Then, we find $[t_1:t_L]$ such that $[u_1^n(m_1,l_1):u_K^n(m_K,l_K),x_1^n(t_1):x_L^n(t_L)] \in \mathcal{T}_\epsilon^{(n)}$. Finally, we forward $t_l$ to BS $l$.
		\item Mapping at the BSs: BSs transmit $x_l^n(t_l)$ to users.
		\item Decoding at the users:
		User $k$ finds $(\hat{m}_k,\hat{l}_k)$ such that $(u_k^n(\hat{m}_k,\hat{l}_k),y_k^n) \in \mathcal{T}_\epsilon^{(n)}$.
	\end{enumerate}
	
	In order to show that the average probability of error $P_e^{(n)}$ for the
	coding scheme vanishes as $n \to \infty$, we analyze three sources of error. 
	For encoding at the central processor, we can find the incides $[l_1:l_K]$ 
	correctly with high probability if $ \sum_{k \in \mathcal{D}} \tilde{R}_k > T(U(\mathcal{D}))$ due to the multivariate covering lemma \cite[Lemma 14.1]{ElGamalKim2011NetworkIT}.
	Similarly, we can find the indices $[t_1:t_L]$ correctly with high probability if 
	$\sum_{l \in \mathcal{S}} C_l > I(U(\mathcal{K});X(\mathcal{S})) + T(X(\mathcal{S}))$.
	Finally, the decoding at the user side is successful with high probability if $R_k + \tilde{R}_k <
	I(U_k;Y_k)$ due to the joint typicality lemma \cite[p. 29]{ElGamalKim2011NetworkIT}. Using the Fourier-Motzkin elimination, we project
	out the auxiliary rates $\tilde{R}_k$ to obtain required the rate region.

\section{Proof of Theorem \ref{thm:2}}
\label{appendix:proof_ddf}

	We specialize the DDF coding scheme to the C-RAN model as follows. 
	
	Fix the distribution $p(u_1,\ldots,u_K,x_1,\ldots,x_L)$. Let $\epsilon > 0$.
	\begin{enumerate}
		\item Codebook generation:
		We generate a random codebook for Marton's multicoding
		according to $u_k^n(m_k,l_k) \sim \prod_{i=1}^{n} p_{U_k}(u_{ki})$ for
		$(m_k,l_k) \in [1:2^{nR_k}] \times [1:2^{n\tilde{R}_k}]$, $k \in \mathcal{K}$. 
		Similarly, we
		generate a random codebook for multivariate compression according to 
		$x_l^n(t_l) \sim \prod_{i=1}^n p_{X_l}(x_{li})$ for $t_l \in [1:2^{nC_l}]$, $l \in \mathcal{L}$.
		
		\item Encoding at the CP:
		To send $[m_1:m_K]$, we find $[l_1:l_K,t_1:t_L]$ such that $[u_1^n(m_1,l_1):u_K^n(m_K,l_K),x_1^n(t_1):x_L^n(t_L)] \in \mathcal{T}_\epsilon^{(n)}$.
		
		\item Mapping at the BSs: BSs transmit $x_l^n(t_l)$ to users.
		
		\item Decoding at the users:
		User $k$ finds $(\hat{m}_k,\hat{l}_k)$ such that $(u_k^n(\hat{m}_k,\hat{l}_k),y_k^n) \in \mathcal{T}_\epsilon^{(n)}$.
	\end{enumerate}
	
	Similar to the probability of error analysis in the compression strategy, to
	show that the average probability of error $P_e^{(n)}$ for the coding scheme
	vanishes as $n \to \infty$, we analyze two sources of error. 
	For encoding at the CP, we can find the indices $[l_1:l_K,t_1:t_L]$ correctly
	with high probability if 
	$\sum_{k \in \mathcal{D}} \tilde{R}_k + \sum_{l \in \mathcal{S}} C_l>
	T(U(\mathcal{D}),X(\mathcal{S}))$, due to the multivariate covering lemma 
	\cite[Lemma 14.1]{ElGamalKim2011NetworkIT}. The decoding at the user side 
	is successful with high probability if
	$R_k + \tilde{R}_k < I(U_k;Y_k)$ due to the joint typicality lemma 
	\cite[p. 29]{ElGamalKim2011NetworkIT}. 
	Combining the two, we obtain the required rate region.

\section{Proof of Theorem \ref{thm:sum_fronthaul}}
\label{appendix:sum_fronthaul}

	We examine the set of achievable rate tuples $(R_1,\ldots,R_K)$ of the
	generalized compression and the DDF strategies under a sum fronthaul
	constraint $C$. Since the compression strategy is a special case of the DDF
	strategy, we have that $\mathcal{R}^{\mathrm{s}}_{\textrm{COM}}(C) \subseteq
	\mathcal{R}^{\mathrm{s}}_{\textrm{DDF}}(C)$. The main part of the proof is to
	show that $\mathcal{R}^{\mathrm{s}}_{\textrm{DDF}}(C) \subseteq
	\mathcal{R}^{\mathrm{s}}_{\textrm{COM}}(C)$. The proof uses properties of
	submodular optimization.
		
	Take any achievable $(R_1,\ldots,R_K)$ using the DDF strategy under a fixed distribution 
	$p(u_1,\ldots,u_K,x_1,\ldots,x_L)$ and under the fixed sum fronthaul 
	constraint $C$. By definition, it must satisfy the inequalities
	\eqref{eq:DDF_sumfronthaul_1} and \eqref{eq:DDF_sumfronthaul_2}.
	Define $\underline{\mathcal{P}}(C) \in \mathbb{R}^K$ to be the polytope formed by the inequalities
	\eqref{eq:DDF_sumfronthaul_1} and \eqref{eq:DDF_sumfronthaul_2}. We show that
	each extreme point of $\underline{\mathcal{P}}(C)$ can be achieved using
	the time-sharing of rate tuples under the generalized compression strategy. 

	The inequalities \eqref{eq:DDF_sumfronthaul_1}-\eqref{eq:DDF_sumfronthaul_2}
	define $\underline{\mathcal{P}}(C)$ to be set of $(R_1,\ldots,R_K)$ for which
	\begin{equation}
	\label{eq:tight_before_1}
	\sum_{k \in \mathcal{D}} R_k  \le 
	\min \left\{ \begin{aligned} & \sum_{k \in \mathcal{D}} I(U_k;Y_k) - T(U(\mathcal{D})), \\
	& \sum_{k \in \mathcal{D}} I(U_k;Y_k) + C - T(U(\mathcal{D}),X(\mathcal{L}))
				\end{aligned} \right\}
	\end{equation}
	for all $\mathcal{D} \subseteq \mathcal{K}$. First, we show that we can
	alternatively write the above as
	\begin{equation}
	\label{eq:tight_before_2}
	\sum_{k \in \mathcal{D}} R_k  \le 
	\min \left\{ \begin{aligned} & \sum_{k \in \mathcal{D}} I(U_k;Y_k) - T(U(\mathcal{D})), \\
	& \sum_{k \in \mathcal{K}} I(U_k;Y_k) + C - T(U(\mathcal{K}),X(\mathcal{L})) \end{aligned} \right\}
	\end{equation}
	for all $\mathcal{D} \subseteq \mathcal{K}$. 
	The reason is that for any set $\mathcal{D} \subseteq \mathcal{K}$, we always
	have $\sum_{k \in \mathcal{D}} R_k \le \sum_{k \in \mathcal{K}} R_k$, 
	since the user rates are non-negative. But we already have the constraint
	\begin{equation}
	\sum_{k \in \mathcal{K}} R_k \le \sum_{k \in \mathcal{K}} I(U_k;Y_k) + C -
	T(U(\mathcal{K}),X(\mathcal{L})).
	\end{equation} 
	So, we can add the constraint 
	\begin{equation}
	\sum_{k \in \mathcal{D}} R_k \le \sum_{k \in \mathcal{K}} I(U_k;Y_k) + C -
	T(U(\mathcal{K}),X(\mathcal{L}))
	\label{eq:over_K}
	\end{equation} 
	to (\ref{eq:tight_before_1}) without affecting $\underline{\mathcal{P}}(C)$. 
	Now, it turns out that 
	\begin{multline}
	\sum_{k \in \mathcal{K}} I(U_k;Y_k) + C - T(U(\mathcal{K}),X(\mathcal{L})) \le \\
	\sum_{k \in \mathcal{D}} I(U_k;Y_k) + C - T(U(\mathcal{D}),X(\mathcal{L})),
	\label{eq:tighter_constraint}
	\end{multline} 
	so this new constraint is actually tighter than the second constraint in
	(\ref{eq:tight_before_1}).  Therefore, (\ref{eq:tight_before_1}) can be
	equivalently written as (\ref{eq:tight_before_2}). 
	
	To verify (\ref{eq:tighter_constraint}), we take the difference in summing
	over $\mathcal{D}$ versus summing over $\mathcal{K}$ in (\ref{eq:tighter_constraint}) 
	as below: 
	\begin{align}
	&T(U(\mathcal{K}),X(\mathcal{L})) - T(U(\mathcal{D}),X(\mathcal{L})) - \sum_{k \in \mathcal{D}^c} I(U_k; Y_k) \nonumber \\	
	& \qquad = I(U(\mathcal{D}^c);U(\mathcal{D})) + T(U(\mathcal{D}^c)) \nonumber \\
	& \qquad \qquad + I(U(\mathcal{D}^c);X(\mathcal{L})|U(\mathcal{D})) 
	- \sum_{k \in \mathcal{D}^c} I(U_k; Y_k),	
	\end{align}
	where $\mathcal{D}^c = \mathcal{K} \setminus \mathcal{D}$. This can be simplified as
	\begin{align}
	& \sum_{k \in \mathcal{D}^c} h(U_k | Y_k) - h(U(\mathcal{D}^c) | X(\mathcal{L}), U(\mathcal{D})) \\
	& \quad \overset{(a)}\ge \sum_{k \in \mathcal{D}^c} h(U_k | Y_k) -  \sum_{k \in \mathcal{D}^c} h(U_k | X(\mathcal{L})) \\
	& \quad \overset{(b)}\ge \sum_{k \in \mathcal{D}^c} h(U_k | Y_k) -  \sum_{k \in \mathcal{D}^c} h(U_k | X(\mathcal{L}), Y_k) \\
	& \quad = \sum_{k \in \mathcal{D}^c} I(U_k, X(\mathcal{L}) | Y_k) \\
	& \quad \ge 0,
	\end{align}
	where $(a)$ follows from the fact that conditioning reduces entropy and $(b)$
	follows since $U_k \rightarrow X(\mathcal{L}) \rightarrow Y_k$ form a Markov
	chain. This verifies (\ref{eq:tighter_constraint}), hence the equivalence
	between (\ref{eq:tight_before_1}) and (\ref{eq:tight_before_2}). 
	
	Let us now define a set function $f : 2^\mathcal{K} \rightarrow \mathbb{R}$ as
	\begin{equation}
	f(\mathcal{D}) := \min \left\{ \begin{aligned} & \sum_{k \in \mathcal{D}} I(U_k;Y_k) - T(U(\mathcal{D})), \\
									& \sum_{k \in \mathcal{K}} I(U_k;Y_k) + C - T(U(\mathcal{K}),X(\mathcal{L}))
									\end{aligned} \right\} \label{eq:f_definition}
	\end{equation}
	for each $\mathcal{D} \subseteq \mathcal{K}$.
	By construction, $\underline{\mathcal{P}}(C)$ is the set of $(R_1,\ldots,R_K)$ that satisfies
	\begin{equation}
	\sum_{k \in \mathcal{D}} R_k \le f(\mathcal{D}).
	\end{equation}
	Since the second term in the $\min$ expression in
\eqref{eq:f_definition} is a constant that does not depend of $\mathcal{D}$, 
it can be verified that the function $f$ is a submodular function
\cite{zhou2016optimal}, if the Marton's region is a polymatroid (which we
assume in this paper). This allows the rate region $\underline{\mathcal{P}}(C)$
to have a polymatroid structure. We remark that, although the Marton's region
may not be polymatroid in general, for the constant gap Gaussian distribution,
we can guarantee a certain monotone property of the Marton's rate expression by
appropriate choice of the noise variance leading to a polymatroid rate region. 

A result in submodular optimization~\cite{Schrijver2003} is that for a linear
	ordering $i_1 \prec i_2 \prec \ldots \prec i_K$ of $\{1,\ldots,K\}$, an extreme
	point of $\underline{\mathcal{P}}(C)$ can be greedily computed as
	$({R}_1,\ldots,{R}_K)$ where
	\begin{equation}
	{R}_{i_j} = f(\{i_1,\ldots,i_j\}) - f(\{i_1,\ldots,i_{j-1}\}).
	\end{equation}
	Moreover, all extreme points of $\underline{\mathcal{P}}(C)$ can be enumerated by considering all linear orderings. Since each ordering of $\{1,\ldots,K\}$ is analyzed in the same manner, for notational simplicity, we consider the natural ordering $i_j = j$.
	
	Let $j$ be the first index for which 
	\begin{multline}
	\label{eq:assumption_sumfronthaul}
	\sum_{k \in \mathcal{K}} I(U_k;Y_k) + C - T(U(\mathcal{K}),X(\mathcal{L})) < \\
	\sum_{k = 1}^j I(U_k;Y_k) - T(U_1,\ldots,U_{j}).
	\end{multline} 
	Then, by construction, $\forall k < j$
	\begin{align}
	{R}_k &= \sum_{i=1}^k I(U_i;Y_i) - T(U_1,\ldots,U_k) \nonumber \\
	& \qquad - \sum_{i=1}^{k-1} I(U_i;Y_i) - T(U_1,\ldots,U_{k-1}) \\
	&= I(U_k;Y_k) - I(U_k;U_{k-1},\ldots,U_1).
	\label{eq:RR_1}
	\end{align}
	Furthermore, using the fact that the second term of $f(\mathcal{D})$ does not
	depend on $\mathcal{D}$, so when the second term is the minimum, i.e., 
	$\forall k > j$, we have
	\begin{equation}
	{R}_k = 0.
	\label{eq:RR_2}
	\end{equation}
	Finally, we express ${R}_j$ as
	\begin{align}
	{R}_j & = \sum_{k \in \mathcal{K}} I(U_k;Y_k) + C - T(U(\mathcal{K}),X(\mathcal{L})) \nonumber \\
	& \qquad - \sum_{k = 1}^{j-1} I(U_k;Y_k) - T(U_1,\ldots,U_{j-1}) \\
	&= I(U_j;Y_j) - I(U_j;U_{j-1},\ldots,U_1) \nonumber \\
	& \qquad + \sum_{k \in \mathcal{K}} I(U_k;Y_k) + C - T(U(\mathcal{K}),X(\mathcal{L})) \nonumber \\
	& \qquad \qquad - \sum_{k = 1}^{j} I(U_k;Y_k) + T(U_1,\ldots,U_{j}) \\		
	&= (1-\alpha) \left( I(U_j;Y_j) - I(U_j;U_{j-1},\ldots,U_1) \right),
	\label{eq:RR_3}
	\end{align}
	where $\alpha$ can be written explicitly as below 
	\begin{equation}
\frac{ T(U(\mathcal{K}),X(\mathcal{L})) - T(U_1,\ldots,U_{j}) - C - \displaystyle \sum_{k = {j+1}}^K I(U_k;Y_k)} {I(U_j;Y_j) - I(U_j;U_{j-1},\ldots,1)}.
	\end{equation}
	Note that $\alpha \in (0,1]$ due to \eqref{eq:assumption_sumfronthaul} 
	and the fact that ${R}_j \ge 0$.
	
	Now, we construct the time-sharing of two rate tuples corresponding to
	the generalized compression strategy (\ref{eq:COM_sumfronthaul}) that achieves
	this above rate $({R}_1,\ldots,{R}_K)$ as follows: 
	\begin{enumerate}
		\item For $(1-\alpha)$ fraction of the time, transmit messages for users $1,\ldots,j$,
		only, i.e., set $p(u_1,\ldots,u_j,x_1,\ldots,x_L)$ to be the marginal
		distribution of the original distribution, but let $U_{j+1},\dots,U_K$ be constants. 
		\item For the rest $\alpha$ fraction of time, transmit messages for users 
		$1,\ldots,({j-1})$ only, i.e., set $p(u_1,\ldots,u_{j-1},x_1,\ldots,x_L)$ to be 
		the marginal distribution of the original distribution, but let $U_{j},\dots,U_K$ be constants. 
	\end{enumerate}
	
	By construction, the time-sharing of these two compression schemes achieves the
	same rate tuple as the extreme point of $\underline{\mathcal{P}}(C)$,
	(\ref{eq:RR_1}), (\ref{eq:RR_2}), and (\ref{eq:RR_3}).
	
	To calculate the fronthaul capacity consumption of this time-sharing scheme, 
	we have
	\begin{align}
\bar{C} &= (1-\alpha) \left ( I(U_1,\ldots,U_j;X(\mathcal{L})) + T(X(\mathcal{L})) \right) \nonumber \\
& \qquad \alpha \left( I(U_1,\ldots,U_{j-1};X(\mathcal{L})) + T(X(\mathcal{L})) \right) \\	
&= I(U_1,\ldots,U_j;X(\mathcal{L})) + T(X(\mathcal{L})) \nonumber \\
& \qquad - \alpha(I(U_1,\ldots,U_j;X(\mathcal{L})) - I(U_1,\ldots,U_{j-1};X(\mathcal{L})))\\
&= I(U_1,\ldots,U_j;X(\mathcal{L})) + T(X(\mathcal{L})) - C \nonumber \\
& \qquad - \frac{I(U_1,\ldots,U_j;X(\mathcal{L})) - I(U_1,\ldots,U_{j-1};X(\mathcal{L}))}{I(U_j;Y_j) - I(U_j;U_{j-1},\ldots,U_1)} \nonumber \\
& \qquad \quad  
\cdot \left( \rule{0pt}{7mm} T(U(\mathcal{K}),X(\mathcal{L})) - T(U_1,\ldots,U_{j}) - C \right.
\nonumber \\
& \qquad \qquad \left. - \sum_{k = {j+1}}^K I(U_k;Y_k) \right) + C \\
&\overset{(a)}{\le} I(U_1,\ldots,U_j;X(\mathcal{L})) + T(X(\mathcal{L})) -C \nonumber \\
& \qquad -  T(U(\mathcal{K}),X(\mathcal{L})) + T(U_1,\ldots,U_{j}) +C \nonumber \\
& \qquad \quad + \sum_{k = {j+1}}^K I(U_k;Y_k) + C \label{eq:rate_fronthaul_comparison}\\
& = \left( \sum_{k \in \mathcal{K}} I(U_k;Y_k) + C - T(U(\mathcal{K}),X(\mathcal{L})) \right) \nonumber \\
& \qquad - \left( \sum_{k=1}^j I(U_k;Y_k) + C - T(U_1,\ldots,U_{j},X(\mathcal{L})) \right) \nonumber \\
& \qquad \quad + C \\
&\overset{(b)}{\le} C.
\end{align}
The inequality $(a)$ follows because 
\begin{align} 
& I(U_1,\ldots,U_j;X(\mathcal{L})) - I(U_1,\ldots,U_{j-1};X(\mathcal{L})) \nonumber \\
& \qquad - I(U_j;Y_j) + I(U_j;U_{j-1},\ldots,U_1) \\
& = h(U_j|Y_j) - h(U_j|X(\mathcal{L}), U_{j-1}, \ldots, U_1) \\
& \ge h(U_j|Y_j) - h(U_j|X(\mathcal{L})) \\
& = h(U_j|Y_j) - h(U_j|X(\mathcal{L}),Y_j) \\
& = I(U_j;X(\mathcal{L})|Y_j) \\
& \ge 0,
\end{align}
where we used the fact that conditioning reduces entropy and that $U_j \rightarrow X(\mathcal{L}) \rightarrow Y_j$ forms a Markov chain. Intuitively, this holds because the contribution of $U_j$ to the user rate is less than the fronthaul required to support $U_j$. Note that the term $T(U(\mathcal{K}),X(\mathcal{L})) - T(U_1,\ldots,U_{j}) - C - \sum_{k = {j+1}}^K I(U_k;Y_k)$ is positive from the assumption in \eqref{eq:assumption_sumfronthaul}. The inequality $(b)$ follows from (\ref{eq:tighter_constraint}). 
	
	Therefore, every extreme point $({R}_1,\ldots,{R}_K)$ of
	$\underline{\mathcal{P}}(C)$ is achievable using time-sharing of generalized 
	compression strategies under the same average fronthaul constraint.

\section{Proof of Theorem \ref{thm:sum_rate}}

\label{appendix:sum_rate}

The proof is based on comparing the sum rate achieved by the
compression strategy with that by the DDF strategy. 
Recall that, from the result in \cite{Lim2017DistributedDecodeForward}, for the
DDF strategy the following choice of the distribution achieves to within a
constant gap to the cut-set bound of a Gaussian relay broadcast network: Let
$\mathbf{X}$ to be a vector of $L$ i.i.d.\ $\mathcal{N}(0,P)$ random variables 
and $\mathbf{U} = \mathbf{H} \mathbf{X} + \tilde{\mathbf{Z}}$, where 
$\tilde{\mathbf{Z}} \sim \mathcal{N}(0,\sigma^2 \mathbf{I})$ is independent of 
$\mathbf{Z}$. We show that under such a choice of distribution
${R}^\text{g}_{\text{DDF}} =  {R}^\text{g}_{\text{COM}}$. 
Then it follows that the compression strategy also achieves the sum rate to
within a constant gap to the cut-set bound.  
	
Consider the set of $(R,C_1,\ldots,C_L)$ achievable using the DDF strategy
under such a constant-gap distribution.
For fixed $R$, we define $\overline{\mathcal{P}}(R) \subseteq \mathbb{R}^L$ to be the
polytope defined by inequalities \eqref{eq:sumrate_DDF_1} under the said
distribution.
We now show that each extreme point of $\overline{\mathcal{P}}(R)$ is dominated
by some time-sharing of points in the compression region.
	
	Let us define a set function $g : 2^\mathcal{L} \rightarrow \mathbb{R}$ as
	\begin{equation}
	g(\mathcal{S}) := \max \left\{ \begin{aligned} & T(U(\mathcal{K}),X(\mathcal{S})) + R - \sum_{k \in \mathcal{K}} I(U_k;Y_k), \\
	& 0 \end{aligned} \right\} \label{eq:g_definition}
	\end{equation}
	for each $\mathcal{S} \subseteq \mathcal{L}$.
	By construction, then $\overline{\mathcal{P}}(R)$ is equal to the set of $(C_1,\ldots,C_L)$ that satisfy
	\begin{equation}
	\sum_{l \in \mathcal{S}} C_l \ge g(\mathcal{S}).
	\end{equation}
	Since the second term in the $\max$ expression in
\eqref{eq:g_definition} is a constant, it can be verified that the function $g$
is a supermodular function \cite{courtade2014multiterminal} and as a
consequence the $\overline{\mathcal{P}}(R)$ region is a contra-polymatroid \cite{zhang2007successive}. Similar to the case of submodular optimization, for a linear ordering
$i_1 \prec i_2 \prec \ldots \prec i_K$ of $\{1,\ldots,K\}$, an extreme point of
$\overline{\mathcal{P}}(R)$ can be greedily computed as
	\begin{equation}
	{C}_{i_j} = g(\{i_1,\ldots,i_j\}) - g(\{i_1,\ldots,i_{j-1}\}).
	\end{equation}
	Furthermore, all the extreme points of $\overline{\mathcal{P}}(R)$ can be computed by considering all linear orderings.
	Each ordering of $\{1,\ldots,K\}$ is analyzed in the same manner, hence for notational simplicity we consider the natural ordering $i_j = j$.
	
	Let $j$ be the first index for which ${C}_j > 0$. Then, by construction,
	\begin{equation} {C}_l = 0, \qquad \forall l < j \end{equation}
and
	\begin{equation}
	\label{eq:sumrate_assumption}
	{C}_l = I(X_l;U(\mathcal{K})|X_{l-1},\ldots,X_1), \quad \forall l > j.
	\end{equation}
	Note that the term $T(X(\mathcal{S}))$ vanishes because of the
assumption of independence of $X$'s in the constant-gap distribution.
	Finally, we express ${C}_j$ as 
	\begin{align}
	{C}_j &= I(X_j,\ldots,X_1;U(\mathcal{K})) \nonumber \\
	& \qquad + R - \sum_{k \in \mathcal{K}} I(U_k;Y_k) + T(U(\mathcal{K})) \\
	&= I(X_j;U(\mathcal{K})|X_{j-1},\ldots,X_1) + I(X_{j-1},\ldots,X_1;U(\mathcal{K})) \nonumber \\
	& \qquad + R - \sum_{k \in \mathcal{K}} I(U_k;Y_k)+T(U(\mathcal{K}))
\label{eq:Cj_expression} \\
	&= (1-\beta) I(X_j;U(\mathcal{K})|X_{j-1},\ldots,X_1) \label{eq:C_j_relation},
	\end{align}
	where $\beta$ is defined as
	\begin{align}
\frac{-\left( {I(X_{j-1},\ldots,X_1;U(\mathcal{K}))}{+R-\sum_{k \in \mathcal{K}} I(U_k;Y_k)+T(U(\mathcal{K}))} \right)} {I(X_j;U(\mathcal{K})|X_{j-1},\ldots,X_1)}
	\end{align}
It is not difficult to see that $\beta \in (0,1]$.  This is because $j$ is the
first index for which ${C}_{j}>0$, so ${C}_{j-1}=0$. By definition
of ${C}_j$, it is easy to see that $g(\{1,\ldots,j-1\})=0$.  Observe that
the numerator in the expression for $\beta$ is the negative of the first term
in the definition of $g(\{1,\ldots,{j-1}\})$, so the numerator must be
positive, hence $\beta > 0$. Further, by (\ref{eq:Cj_expression}) and the
fact that ${C}_{j} \ge 0$, we have $\beta \le 1$. 

	Now, consider the following time-sharing of two compression schemes. 
Starting with the fixed constant-gap distribution $p(u_1,\ldots,u_K,x_1,\ldots,x_L)$,  
we modify the distribution as follows:
\begin{enumerate}
		\item For $(1-\beta)$ fraction of the time, keep the BSs
$j,\ldots,L$ active, i.e., for $(1-\beta)$ fraction of the time, keep 
$X_j,\ldots,X_L$ the same and set $X_1=\ldots=X_{j-1}=0$;
denote this distribution as $p(u'_1,\ldots,u'_K,x'_1,\dots,x'_L)$.
		\item For the remaining $\beta$ fraction of the time, keep the
BSs $j+1,\ldots,L$ active, i.e., for $\beta$ fraction of the time, keep
$X_{j+1},\ldots,X_L$ the same and set $X_1=\ldots,X_{j}=0$; 
denote this distribution as $p(u''_1,\ldots,u''_K,x''_1,\dots,x''_L)$.
\end{enumerate}

We first verify that the average fronthaul capacities required for this
time-sharing of two compression schemes, denoted here as 
$\bar{C}_1,\ldots,\bar{C}_L$, are exactly the same as the fronthaul capacities
${C}_1,\ldots,{C}_L$ under the DDF strategy.
For the inactive BSs from 1 to ${j-1}$ the fronthaul
capacities used is zero, i.e.,
\begin{equation}
\bar{C}_l = 0 = {C}_l, \qquad \forall l=1,\ldots,{j-1}. 
\end{equation}
We use the modified distributions under the compression strategy 
to calculate the fronthaul needed for the active BSs.
Note that under the constant-gap distribution (or its modified
form), a corner point of the fronthaul region (\ref{eq:sumrate_COM_2}) is just
\begin{equation}
C_l = I(X_l; U(\mathcal{K}) | X_{l-1},\ldots,X_1) 
\end{equation}
where the term $T(X(\mathcal{S}))$ vanishes because of the assumed independence
of $X$'s in the constant-gap distribution.

Now for BS $j$, since $X_1'=\ldots=X'_{j-1}=0$, the fronthaul used by the
compression strategy is just
\begin{align}
\bar{C}_j & = (1-\beta) I(X'_j;U'(\mathcal{K})) \\ 
	& = (1-\beta)I(X_j;U(\mathcal{K})|X_{j-1},\ldots,X_1) = {C}_j,
\end{align}
where the equality is due to the form of the Gaussian
$p(u_1,\ldots,u_K,x_1,\ldots,x_L)$ in which conditioning on $X_{j-1},\ldots,
X_1$ is the same as setting them to be zero.

For BS $l = (j+1), \ldots, L$, the fronthaul capacity used by the generalized 
compression strategy is given by 
\begin{align}
\bar{C}_j & = (1-\beta)I(X'_l;U'(\mathcal{K})|X'_{l-1},\ldots,X'_{j}) \nonumber \\
& \qquad + \beta I(X''_l;U''(\mathcal{K})|X''_{l-1},\ldots,X''_{j+1}) \\
& = (1-\beta)I(X_l;U(\mathcal{K})|X_{l-1},\ldots,X_{j},X_{j-1},\ldots,X_1) \nonumber \\
& \qquad + \beta I(X_l;U(\mathcal{K})|X_{l-1},\ldots,X_{j+1},X_j,\ldots,X_1) \\
& = I(X_l;U(\mathcal{K})|X_{l-1},\ldots,X_1) = {C}_j,	
\end{align}
This verifies that the time-sharing strategy uses the same amount of fronthaul as
DDF. 
	
	As a final step, we show that the time-sharing of the two compression
schemes achieves a sum rate no less than the DDF strategy.
First, we re-write the sum rate expression under the constant-gap distribution
(or its modified version) in a form that shows explicit dependence on the $X$
variables. 

\begin{lemma} 
	\label{lemma:equi_sumrate}
	Suppose that $\mathbf{X}$ is a vector of indepedent variables, and $\mathbf{Y}=\mathbf{H}\mathbf{X}+\mathbf{Z}$, $\mathbf{Y}=\mathbf{G}\mathbf{Y}+\mathbf{\tilde{Z}}$,
	where $\mathbf{H}$ and $\mathbf{G}$ are fixed matrices and $\mathbf{Z}$ and
	$\mathbf{\tilde{Z}}$ are vectors of independent variables that are also independent
	of each other and of $\mathbf{X}$. Then, 
	\begin{multline}
	\sum_{k \in \mathcal{K}} I(U_k;Y_k) - T(U(\mathcal{K})) = \\
	I(U(\mathcal{K});X(\mathcal{L})) - \sum_{k \in \mathcal{K}} I(U_k;X(\mathcal{L})|Y_k) 
	\label{eq:sumrate_equivalence} 
	\end{multline}
\end{lemma}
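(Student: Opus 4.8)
The plan is to expand both sides into differential entropies and reduce the claimed identity to a single conditional-entropy equality that is then dispatched using the additive-noise structure of the hypothesis. First I would rewrite the left-hand side. Using $T(U(\mathcal{K})) = \sum_{k} h(U_k) - h(U(\mathcal{K}))$ together with $I(U_k;Y_k) = h(U_k) - h(U_k|Y_k)$, the $\sum_k h(U_k)$ terms cancel, and the left-hand side collapses to
\begin{equation}
\sum_{k \in \mathcal{K}} I(U_k;Y_k) - T(U(\mathcal{K})) = h(U(\mathcal{K})) - \sum_{k \in \mathcal{K}} h(U_k|Y_k).
\end{equation}

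Next I would expand the right-hand side in the same way, writing $I(U(\mathcal{K});X(\mathcal{L})) = h(U(\mathcal{K})) - h(U(\mathcal{K})|X(\mathcal{L}))$ and $I(U_k;X(\mathcal{L})|Y_k) = h(U_k|Y_k) - h(U_k|X(\mathcal{L}),Y_k)$. Substituting and cancelling the common terms $h(U(\mathcal{K}))$ and $\sum_k h(U_k|Y_k)$ from both sides, the entire claim reduces to the single equality
\begin{equation}
h(U(\mathcal{K})|X(\mathcal{L})) = \sum_{k \in \mathcal{K}} h(U_k|X(\mathcal{L}),Y_k).
\end{equation}

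The crux is to establish this reduced equality, and this is where the distributional assumptions enter. Since each $U_k$ is a linear function of $\mathbf{X}$ plus the noise component $\tilde{Z}_k$, and since $\tilde{Z}_k$ is independent of both $Z_k$ and $\mathbf{X}$, conditioning additionally on $Y_k = (\mathbf{H}\mathbf{X})_k + Z_k$ provides no further information once $X(\mathcal{L})$ is known; that is, $U_k \rightarrow X(\mathcal{L}) \rightarrow Y_k$ is a Markov chain, giving $h(U_k|X(\mathcal{L}),Y_k) = h(U_k|X(\mathcal{L}))$ for each $k$. Moreover, because the components of $\tilde{\mathbf{Z}}$ are mutually independent, the variables $U_1,\ldots,U_K$ are conditionally independent given $X(\mathcal{L})$, so that $h(U(\mathcal{K})|X(\mathcal{L})) = \sum_k h(U_k|X(\mathcal{L}))$. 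Combining these two observations yields the reduced equality, and hence the lemma.

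The bulk of the argument, namely the two entropy expansions and the cancellation, is routine bookkeeping; the only step requiring care is verifying the two conditional-independence properties, that conditioning on $Y_k$ is redundant given $X(\mathcal{L})$ and that the $U_k$'s decouple given $X(\mathcal{L})$. Both are immediate consequences of the hypothesis that $\mathbf{Z}$ and $\tilde{\mathbf{Z}}$ have mutually independent components and are independent of each other and of $\mathbf{X}$, so I do not anticipate a genuine obstacle. The identity is essentially a consequence of the conditional-independence structure induced by the additive independent noises, and I would remark in passing that the independence of the $X$'s, although assumed in the statement, is not actually needed for this particular identity.
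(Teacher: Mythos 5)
Your proof is correct and takes essentially the same route as the paper's: both expand the left side as $h(U(\mathcal{K})) - \sum_{k} h(U_k|Y_k)$ and reduce the identity to the single equality $h(U(\mathcal{K})|X(\mathcal{L})) = \sum_{k} h(U_k|X(\mathcal{L}),Y_k)$, established from the Markov chain $U_k \rightarrow X(\mathcal{L}) \rightarrow Y_k$ together with the conditional independence of the $U_k$'s given $X(\mathcal{L})$ (the paper packages these two facts into its step $(a)$ after adding and subtracting $h(U(\mathcal{K})|X(\mathcal{L}))$). Your closing observations are also accurate: the independence of the components of $\mathbf{X}$ is indeed not used, and you correctly read the lemma's garbled second hypothesis as $\mathbf{U} = \mathbf{G}\mathbf{X} + \tilde{\mathbf{Z}}$, i.e., each $U_k$ is a linear function of $\mathbf{X}$ plus independent noise.
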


\begin{proof}
	\begin{align}
	& \sum_{k \in \mathcal{K}} I(U_k;Y_k) - T(U(\mathcal{K})) \\
	& \quad = h(U(\mathcal{K})) - \sum_{k \in \mathcal{K}} h(U_k|Y_k) \\ 
	& \quad = h(U(\mathcal{K})) - h(U(\mathcal{K})|X(\mathcal{L})) + h(U(\mathcal{K})|X(\mathcal{L})) \nonumber \\
	&  \quad \qquad - \sum_{k \in \mathcal{K}} h(U_k|Y_k) \\
	& \quad \overset{(a)}{=}  I(U(\mathcal{K});X(\mathcal{L})) + \sum_{k \in \mathcal{K}} \left( h(U_k|X(\mathcal{L}),Y_k) - h(U_k|Y_k) \right) \\
	& \quad = I(U(\mathcal{K});X(\mathcal{L})) - \sum_{k \in \mathcal{K}} I(U_k;X(\mathcal{L})|Y_k),
	\end{align} 
	where in $(a)$ we used the fact that $U(\mathcal{K}) \rightarrow X(\mathcal{K})
	\rightarrow Y(\mathcal{K})$ forms a Markov chain and that conditioned on
	$X(\mathcal{L})$ the $U$'s are independent.
\end{proof}
	
Based on the Lemma, 
the sum rate achieved using the time-sharing of the two generalized
compression schemes with the modified constant-gap distributions can be written as
\begin{align}
\bar{R} &= (1-\beta) \big( I(X'_j,\ldots,X'_L;U'(\mathcal{K})) \nonumber \\
&\qquad - \sum_{k \in \mathcal{K}} I(U'_k;X'_j,\ldots,X'_L|Y'_k) \big) \nonumber \\
& \qquad \quad + \beta \big( I(X''_{j+1},\ldots,X''_L;U''(\mathcal{K})) \\ \nonumber
& \qquad \quad\quad  - \sum_{k \in \mathcal{K}} I(U''_k;X''_{j+1},\ldots,X''_L|Y''_k) \big) \\
&\overset{(a)}{=} (1-\beta) I(X_j,\ldots,X_L;U(\mathcal{K})|X_{j-1},\ldots,X_1) \nonumber \\
& \qquad + \beta I(X_{j+1},\ldots,X_L;U(\mathcal{K})|X_j,\ldots,X_1) \nonumber \\
& \qquad \quad - (1-\beta) \sum_{k \in \mathcal{K}} I(U'_k;X'_j,\ldots,X'_L|Y'_k) \nonumber \\
&\qquad \quad\quad  - \beta \sum_{k \in \mathcal{K}} I(U''_k;X''_{j+1},\ldots,X''_L|Y''_k) \\
&= (1-\beta)  \big (I(U(\mathcal{K});X_j|X_{j-1},\ldots,X_1) \nonumber \\
& \qquad + I(U(\mathcal{K});X_{j+1},\ldots,X_L|X_j,\ldots,X_1) \big) \nonumber \\
& \qquad \quad + \beta I(U(\mathcal{K});X_{j+1},\ldots,X_L|X_{j},\ldots,X_1) \nonumber \\
& \qquad \quad \quad - (1-\beta) \sum_{k \in \mathcal{K}} I(U'_k;X'_j,\ldots,X'_L|Y'_k) \nonumber \\
&\qquad \quad \quad \quad - \beta \sum_{k \in \mathcal{K}} I(U''_k;X''_{j+1},\ldots,X''_L|Y''_k) \\
&\overset{(b)}{=} I(U(\mathcal{K});X_1,\ldots,X_L) + R - \sum_{k \in \mathcal{K}} I(U_k;Y_k) + T(U(\mathcal{K})) \nonumber \\
& \qquad \quad - (1-\beta) \sum_{k \in \mathcal{K}} I(U'_k;X'_j,\ldots,X'_L|Y'_k) \nonumber \\
& \qquad \quad \quad - \beta \sum_{k \in \mathcal{K}} I(U''_k;X''_{j+1},\ldots,X''_L|Y''_k) \\
&\overset{(c)}{\ge} I(U(\mathcal{K});X_1,\ldots,X_L) + R - \sum_{k \in \mathcal{K}} I(U_k;Y_k) + T(U(\mathcal{K})) \nonumber \\
& \qquad -  \sum_{k \in \mathcal{K}} I(U_k;X(\mathcal{L})|Y_k) \label{eq:conditioning_inequality} \\
&\overset{(d)}{=} R \label{eq:rate_inequality}.
\end{align}
The equality $(a)$ holds, because as mentioned before, for the constant-gap
distribution, shutting down a BS is the same as conditioning on the
corresponding random variable. For the first $(1-\beta)$ fraction of time, 
we condition on $X_1,\ldots,X_{j-1}$, and for the rest $\beta$ fraction of
the time, we condition on $X_1,\ldots,X_{j}$.  The equality $(b)$
holds from the relation \eqref{eq:C_j_relation}.
The inequality $(c)$ follows because under the modified constant-gap
distribution, 
\begin{align}
I(U'_k; & X'_j,\ldots,X'_L|Y'_k) \nonumber \\
&= h(U'_k|Y'_k) - h(U'_k|X'_j,\ldots,X'_L,Y'_k) \\
&= h(U'_k,Y'_k) - h(Y'_k) - h(\tilde{Z}_k) \\
&= \frac{1}{2}\log\left(1 + \frac{\sum_{l=j}^L h_{k,l}^2P}{\sum_{l=j}^L h_{k,l}^2P + \sigma^2}\right)  \\
&< \frac{1}{2}\log\left(1 + \frac{\sum_{l=1}^L h_{k,l}^2P}{\sum_{l=1}^L h_{k,l}^2P + \sigma^2}\right)  \\
&= I(U_k;X(\mathcal{L})|Y_k),
\end{align}
and similarly $I(U''_k;X''_{j+1},\ldots,X''_L|Y''_k) < I(U_k;X(\mathcal{L})|Y_k)$.
Finally, the equality $(d)$ follows from the equivalent way of
writing the sum rate as shown in Lemma \ref{lemma:equi_sumrate}.
	
Therefore, for every extreme point $({C}_1,\ldots,{C}_L)$ of
$\overline{\mathcal{P}}(R)$, the time-shared compression strategy achieves a
sum rate at least as large as the DDF strategy. This completes the proof.

\end{appendices}

\bibliographystyle{IEEEtran}

\bibliography{IEEEabrv,confabrv,references}

% Generated by IEEEtran.bst, version: 1.13 (2008/09/30)
\begin{thebibliography}{10}
\providecommand{\url}[1]{#1}
\csname url@samestyle\endcsname
\providecommand{\newblock}{\relax}
\providecommand{\bibinfo}[2]{#2}
\providecommand{\BIBentrySTDinterwordspacing}{\spaceskip=0pt\relax}
\providecommand{\BIBentryALTinterwordstretchfactor}{4}
\providecommand{\BIBentryALTinterwordspacing}{\spaceskip=\fontdimen2\font plus
\BIBentryALTinterwordstretchfactor\fontdimen3\font minus
  \fontdimen4\font\relax}
\providecommand{\BIBforeignlanguage}[2]{{%
\expandafter\ifx\csname l@#1\endcsname\relax
\typeout{** WARNING: IEEEtran.bst: No hyphenation pattern has been}%
\typeout{** loaded for the language `#1'. Using the pattern for}%
\typeout{** the default language instead.}%
\else
\language=\csname l@#1\endcsname
\fi
#2}}
\providecommand{\BIBdecl}{\relax}
\BIBdecl

\bibitem{simeone2016cloud}
O.~Simeone, A.~Maeder, M.~Peng, O.~Sahin, and W.~Yu, ``Cloud radio access
  network: Virtualizing wireless access for dense heterogeneous systems,''
  \emph{J. Commun. Netw.}, vol.~18, no.~2, pp. 135--149, Apr. 2016.

\bibitem{costa1983writing}
M.~Costa, ``Writing on dirty paper,'' \emph{{IEEE} Trans. Inf. Theory},
  vol.~29, no.~3, pp. 439--441, May 1983.

\bibitem{schein2001distributed}
B.~E. Schein, ``Distributed coordination in network information theory,'' Ph.D.
  dissertation, Massachusetts Institute of Technology, 2001.

\bibitem{traskov2007reliable}
D.~Traskov and G.~Kramer, ``Reliable communication in networks with
  multi-access interference,'' in \emph{{{IEEE} Inf. Theory Workshop (ITW)}},
  Tahoe City, USA, Sep. 2007, pp. 343--348.

\bibitem{kang_diamond}
W.~Kang and S.~Ulukus, ``Capacity of a class of diamond channels,''
  \emph{{IEEE} Trans. Inf. Theory}, vol.~57, no.~8, pp. 4955--4960, Aug. 2011.

\bibitem{chern14}
B.~Chern and A.~Ozgur, ``Achieving the capacity of the n-relay {Gaussian}
  diamond network within log n bits,'' \emph{{IEEE} Trans. Inf. Theory},
  vol.~60, no.~12, pp. 7708--7718, Dec. 2014.

\bibitem{kang15}
W.~Kang, N.~Liu, and W.~Chong, ``The {G}aussian multiple access diamond
  channel,'' \emph{{IEEE} Trans. Inf. Theory}, vol.~61, no.~11, pp. 6049--6059,
  Nov. 2015.

\bibitem{DaiYu_Access14}
B.~Dai and W.~Yu, ``Sparse beamforming and user-centric clustering for downlink
  cloud radio access network,'' \emph{{IEEE} Access}, vol.~2, pp. 1326--1339,
  2014.

\bibitem{ZakhourGesbert11}
R.~Zakhour and D.~Gesbert, ``Optimized data sharing in multicell {MIMO} with
  finite backhaul capacity,'' \emph{{IEEE} Trans. Signal Process.}, vol.~59,
  no.~12, pp. 6102--6111, Dec. 2011.

\bibitem{LiuKang2014}
N.~Liu and W.~Kang, ``A new achievability scheme for downlink multicell
  processing with finite backhaul capacity,'' in \emph{{IEEE} Int. Symp. Inf.
  Theory (ISIT)}, Honolulu, USA, Jun.-Jul. 2014, pp. 1006--1010.

\bibitem{wang2018}
C.~Wang, M.~Wigger, and A.~Zaidi, ``On achievability for downlink cloud radio
  access networks with base station cooperation,'' \emph{{IEEE} Trans. Inf.
  Theory}, vol.~64, no.~8, pp. 5726--5742, Aug. 2018.

\bibitem{wang2017}
------, ``On achievability for downlink cloud radio access networks with base
  station cooperation,'' in \emph{{IEEE} Wireless Commun. Netw. Conf.}, San
  Francisco, USA, Mar. 2017.

\bibitem{YiLiu2015}
X.~Yi and N.~Liu, ``An achievability scheme for downlink multicell processing
  with finite backhaul capacity: The general case,'' in \emph{Int. Conf.
  Wireless Commun. Signal Process. (WCSP)}, Oct. 2015.

\bibitem{BidokhtiKramerShamai2017}
S.~S. Bidokhti, G.~Kramer, and S.~S. Shitz, ``Capacity bounds on the downlink
  of symmetric, multi-relay, single receiver {C-RAN} networks,'' in
  \emph{{IEEE} Int. Symp. Inf. Theory (ISIT)}, Aachen, Germany, Jun. 2017, pp.
  2058--2062.

\bibitem{yang2017upper}
T.~Yang, N.~Liu, W.~Kang, and S.~S. Shitz, ``An upper bound on the sum capacity
  of the downlink multicell processing with finite backhaul capacity,'' in
  \emph{{IEEE} Int. Symp. Inf. Theory (ISIT)}, Aachen, Germany, Jun. 2017, pp.
  2053--2057.

\bibitem{hong_caire_journal}
S.-N. Hong and G.~Caire, ``Compute-and-forward strategies for cooperative
  distributed antenna systems,'' \emph{{IEEE} Trans. Inf. Theory}, vol.~59,
  no.~9, pp. 5227--5243, Sep. 2013.

\bibitem{ParkSimeoneSahinShamai2014}
S.~H. Park, O.~Simeone, O.~Sahin, and S.~S. Shitz, ``Fronthaul compression for
  cloud radio access networks: Signal processing advances inspired by network
  information theory,'' \emph{{IEEE} Signal Process. Mag.}, vol.~31, no.~6, pp.
  69--79, Nov. 2014.

\bibitem{patil2018hybrid}
P.~Patil, B.~Dai, and W.~Yu, ``Hybrid data-sharing and compression strategy for
  downlink cloud radio access network,'' \emph{{IEEE} Trans. Commun.}, vol.~66,
  no.~11, pp. 5370--5384, Nov. 2018.

\bibitem{SimeoneSomekhPoorShamai09}
O.~Simeone, O.~Somekh, H.~V. Poor, and S.~Shamai~(Shitz), ``Downlink multicell
  processing with limited-backhaul capacity,'' \emph{EURASIP J. Advances Signal
  Process.}, Jun. 2009.

\bibitem{ParkSimeoneSahinShamai13}
S.-H. Park, O.~Simeone, O.~Sahin, and S.~Shamai, ``Joint precoding and
  multivariate backhaul compression for the downlink of cloud radio access
  networks,'' \emph{{IEEE} Trans. Signal Process.}, vol.~61, no.~22, pp.
  5646--5658, Nov. 2013.

\bibitem{Lim2017DistributedDecodeForward}
S.~H. Lim, K.~T. Kim, and Y.~H. Kim, ``Distributed decode{-}forward for relay
  networks,'' \emph{{IEEE} Trans. Inf. Theory}, vol.~63, no.~7, pp. 4103--4118,
  Jul. 2017.

\bibitem{GangulyKim2017}
S.~Ganguly and Y.-H. Kim, ``On the capacity of cloud radio access networks,''
  in \emph{{IEEE} Int. Symp. Inf. Theory (ISIT)}, Aachen, Germany, Jun. 2017,
  pp. 2063--2067.

\bibitem{watanabe1960information}
S.~Watanabe, ``Information theoretical analysis of multivariate correlation,''
  \emph{IBM J. Research Develop.}, vol.~4, no.~1, pp. 66--82, Jan. 1960.

\bibitem{ElGamalKim2011NetworkIT}
A.~El~Gamal and Y.-H. Kim, \emph{Network information theory}.\hskip 1em plus
  0.5em minus 0.4em\relax Cambridge University Press, 2011.

\bibitem{zhang2007successive}
X.~Zhang, J.~Chen, S.~B. Wicker, and T.~Berger, ``Successive coding in
  multiuser information theory,'' \emph{{IEEE} Trans. Inf. Theory}, vol.~53,
  no.~6, pp. 2246--2254, Jun. 2007.

\bibitem{Wei_CTW}
W.~Yu, ``Cloud radio access networks: coding strategies, capacity analysis, and
  optimization techniques,'' Presented at \textit{{IEEE} Commun. Theory
  Workshop (CTW)}, Nafplio, Greece, May 2016.

\bibitem{color_paper}
W.~Yu, A.~Sutivong, D.~Julian, T.~M. Cover, and M.~Chiang, ``Writing on colored
  paper,'' in \emph{{IEEE} Int. Symp. Inf. Theory (ISIT)}, Washington, DC, USA,
  Jun. 2001, p. 302.

\bibitem{mondelli2015achieving}
M.~Mondelli, S.~H. Hassani, I.~Sason, and R.~L. Urbanke, ``Achieving
  {Marton’s} region for broadcast channels using polar codes,'' \emph{{IEEE}
  Trans. Inf. Theory}, vol.~61, no.~2, pp. 783--800, Feb. 2015.

\bibitem{zhou2016optimal}
Y.~Zhou, Y.~Xu, W.~Yu, and J.~Chen, ``On the optimal fronthaul compression and
  decoding strategies for uplink cloud radio access networks,'' \emph{{IEEE}
  Trans. Inf. Theory}, vol.~62, no.~12, pp. 7402--7418, Dec. 2016.

\bibitem{aguerri2017capacity}
I.~E. {Aguerri}, A.~{Zaidi}, G.~{Caire}, and S.~S. {Shitz}, ``On the capacity
  of cloud radio access networks with oblivious relaying,'' \emph{{IEEE} Trans.
  Inf. Theory}, vol.~65, no.~7, Jul. 2019.

\bibitem{liu2016uplink}
L.~Liu, P.~Patil, and W.~Yu, ``An uplink-downlink duality for cloud radio
  access network,'' in \emph{{IEEE} Int. Symp. Inf. Theory (ISIT)}, Barcelona,
  Spain, Jul. 2016, pp. 1606--1610.

\bibitem{Schrijver2003}
A.~Schrijver, \emph{Combinatorial optimization: polyhedra and
  efficiency}.\hskip 1em plus 0.5em minus 0.4em\relax Springer-Verlag, 2003.

\bibitem{courtade2014multiterminal}
T.~A. Courtade and T.~Weissman, ``Multiterminal source coding under logarithmic
  loss,'' \emph{{IEEE} Trans. Inf. Theory}, vol.~60, no.~1, pp. 740--761, Jan.
  2014.

\end{thebibliography}

\end{document}